\documentclass[a4paper]{article}
\usepackage{a4wide}
\usepackage[UKenglish]{babel}
\usepackage{graphicx,subfigure,caption}
\graphicspath{{./}{figures/}}
\usepackage[colorlinks=true,linkcolor=blue,citecolor=red]{hyperref}
\usepackage{amssymb,amsthm,empheq,bbold}
\usepackage[inline]{enumitem}
\usepackage{authblk}
\usepackage{circledsteps}
\usepackage{algorithm,algpseudocode}

\newcommand{\abs}[1]{\left\vert#1\right\vert}
\newcommand{\ave}[1]{\left\langle#1\right\rangle}
\newcommand{\E}{\mathbb{E}}
\newcommand{\cE}{\mathcal{E}}
\renewcommand{\bf}{\mathbf{f}}
\newcommand{\cG}{\mathcal{G}}
\newcommand{\cI}{\mathcal{I}}
\newcommand{\indeg}[1]{\operatorname{indeg}(#1)}
\newcommand{\bM}{\mathbf{M}}
\newcommand{\cO}{\mathcal{O}}
\newcommand{\ones}{\mathbf{1}}
\newcommand{\outdeg}[1]{\operatorname{outdeg}(#1)}
\newcommand{\Prob}[1]{\operatorname{Prob}(#1)}
\newcommand{\R}{\mathbb{R}}
\newcommand{\bv}{\mathbf{v}}
\newcommand{\cV}{\mathcal{V}}
\newcommand{\bw}{\mathbf{w}}
\newcommand{\cW}{\mathcal{W}}

\newtheorem{lemma}{Lemma}[section]
\newtheorem{theorem}[lemma]{Theorem}
\theoremstyle{definition}\newtheorem{definition}[lemma]{Definition}
\theoremstyle{remark}\newtheorem{remark}[lemma]{Remark}
\theoremstyle{remark}\newtheorem{example}[lemma]{Example}

\allowdisplaybreaks

\title{Network-based kinetic models: Emergence of a statistical description of the graph topology}
\author[$\ast$,$\ddagger$]{M. Nurisso}
\author[$\dag$]{M. Raviola}
\author[$\ast$]{A. Tosin}
\affil[$\ast$]{{\footnotesize Department of Mathematical Sciences ``G. L. Lagrange'', Politecnico di Torino, Italy}}
\affil[$\dag$]{{\footnotesize CSQI Chair, \'{E}cole Polytechnique F\'{e}d\'{e}rale de Lausanne, Switzerland}}
\affil[$\ddagger$]{{\footnotesize CENTAI Institute, Turin, Italy}}
\date{}

\begin{document}
\maketitle
    
\begin{abstract}
In this paper, we propose a novel approach that employs kinetic equations to describe the collective dynamics emerging from graph-mediated pairwise interactions in multi-agent systems. We formally show that for large graphs and specific classes of interactions a statistical description of the graph topology, given in terms of the degree distribution embedded in a Boltzmann-type kinetic equation, is sufficient to capture the collective trends of networked interacting systems. This proves the validity of a commonly accepted heuristic assumption in statistically structured graph models, namely that the so-called connectivity of the agents is the only relevant parameter to be retained in a statistical description of the graph topology. Then we validate our results by testing them numerically against real social network data.
\medskip

\noindent{\textbf{Keywords:} multi-agent systems, networked interactions, degree distribution, graph-based kinetic equations, Boltzmann-type equations}

\medskip

\noindent{\textbf{Mathematics Subject Classification:} 35Q20, 82C22, 05C07}
\end{abstract}

\section{Introduction}
In recent years, kinetic models have gained great popularity as convenient tools to study interacting multi-agent systems~\cite{pareschi2013BOOK}, which constitute the modelling paradigm for various socio-economic applications. Since such systems feature interconnected agents, a notion of \textit{graph} is often naturally required in the models. A prominent prototype is opinion dynamics in social networks, where agents interact only with their own contacts, i.e. their first neighbours in the graph modelling the social network. The vertices of such a graph are the agents while the edges describe the connections among them. The problem is that network-based models become quickly complex as the number of vertices grows, thereby posing challenges from the point of view of both mathematical analysis and computation. It is therefore natural to look for \textit{statistical} limit descriptions of the graph connections emerging when the size of the graph tends to infinity.

To this purpose, the notion of \textit{graphon} has been introduced in graph theory~\cite{lovasz2012BOOK}. A graphon is a suitable limit of a sequence of graphs of growing size conceived so as to represent large networks by a continuous model. Informally, given a graph $\cG_N$ with, say, $N\in\mathbb{N}$ vertices one first associates with the adjacency matrix $\bM_N\in\R^{N\times N}$ of $\cG_N$ a piecewise constant function $W_N$, which reproduces the entries of $\bM_N$ on a $N\times N$ discretisation of $[0,\,1]\times [0,\,1]\subset\R^2$ in sub-squares of side length $1/N$. Next, one takes the limit $N\to\infty$ of the sequence $\{W_N\}_{N\in\mathbb{N}}$, under a suitable notion of convergence, to possibly get a limit function $W:[0,\,1]^2\to [0,\,1]$, the so-called graphon, which describes the connections of the infinite-size limit graph.

In this paper, we take inspiration loosely from these ideas to incorporate a statistical continuous description of graph connections in Boltzmann-type kinetic equations used to represent statistically the collective dynamics of multi-agent systems. Our main contribution is that, under the classical molecular chaos hypothesis, we obtain kinetic equations able to account for the heterogeneous structure of the connections among the agents. We show that, for particular classes of binary interactions, these equations rely only on a statistical description of the graph topology. In particular, we formally prove that all information about the adjacency matrix of the graph may be lumped in the notion of \textit{degree distribution} of the graph, i.e. the statistical distribution of the numbers of incoming and outgoing edges of the vertices.

In more detail, our approach develops along the following line. The main idea is to augment the state of each agent by including in it, besides the variable, say $v$, characterising the interaction dynamics, two additional variables representing the incoming and outgoing degrees of the agent. Hence, the kinetic distribution function on the augmented state space describes the distribution of agents possessing a certain characteristic variable $v$ plus a certain number of incoming and outgoing connections. Passing formally to the limit of an infinite number of agents, i.e. of vertices of the graph, we show that this kinetic description converges, at least for certain classes of binary interactions, to a classical Boltzmann-type equation defined on the augmented state space and whose interaction kernel carries the information about the graph degrees. These therefore affect the rate of binary interactions among the agents, which turns out to be all the relevant information about agent connections from a statistical point of view.

It is worth recalling that previous works in this and related contexts approach the problem of upscaling particle descriptions to aggregate descriptions of networked interactions in different ways. Without claiming to be exhaustive, and confining ourselves to particularly recent contributions, we mention~\cite{burger2021network}, where networks are specified by linking structural variables to the agents and introducing interaction rates depending on them;~\cite{burger2021kinetic}, in which the joint evolution of the agents' states and the network itself is considered;~\cite{coppini2020law,delattre2016note}, in which particular classes of random graphs are studied.

After this introduction, the paper is organised as follows. In Section~\ref{sect:kinetic_graph-mediated} we derive kinetic equations on graphs starting from graph-mediated particle interactions. Here, the approach is technically reminiscent of that proposed in~\cite{loy2021KRM,loy2021MBE}, in particular for the fact that the graph is still a finite-size one and the connections among the vertices are described in detail by the adjacency matrix. A distribution function of the characteristic variable $v$ is associated with every vertex, viz. agent, and a system of coupled kinetic equations is derived, the coupling being dictated by the adjacency matrix of the graph (cf. Figure~\ref{fig:graph_illustration}a). In Section~\ref{sect:stat_description} we show that from this setting a single kinetic equation, defined on the aforementioned augmented state space and which incorporates a degree-based continuous description of the graph connections, naturally emerges in the limit of an infinite number of vertices/agents if one considers a very special class of binary interactions among the agents, that we call \textit{polarised memory} interactions. In Section~\ref{sect:equiv_boltz} we prove that the limiting kinetic equation so obtained can be recast in the form of a classical Boltzmann-type equation, whose interaction kernel features a precise dependence on the incoming and outgoing degrees of a generic representative vertex/agent of the system. In Section~\ref{sect:general_interactions} we investigate to what extent this result can be extended to a more general class of binary interactions, namely that of \textit{separable interactions} which includes, as a special case, linear interactions. In Section~\ref{sect:approximations_of_M} we prove that a quite natural rank-one approximation of the adjacency matrix of any graph allows for the extension of the results of the previous sections to arbitrary interaction rules. In Section~\ref{sect:numerics} we validate our theoretical findings by comparing numerically the dynamics produced by the original graph-mediated particle interactions and the solution to our Boltzmann-type equation, using data of user connections coming from a real social network. Finally, in Section~\ref{sect:conclusions} we draw some conclusions and we briefly sketch further possible research directions.

\begin{figure}[!t]
\centering
\includegraphics[width=\linewidth]{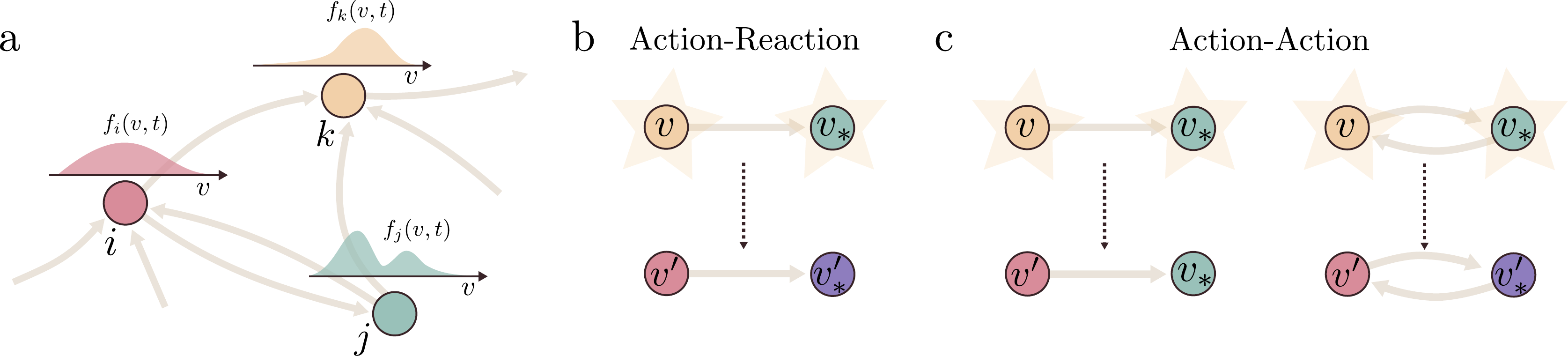}
\caption{\textbf{a.} Graphical representation of the interaction framework considered in this work. Each agent is identified with a vertex in a directed graph and is characterised by a probability distribution of their state which evolves in time. \textbf{b.} In an \textit{action-reaction} interaction between agents $i,\,j\in\cI$ connected by the edge $(i,\,j)\in\cE$ the states $v$, $v_\ast$ of both agents are updated. \textbf{c.} In an \textit{action-action} interaction, the state $v_\ast$ of agent $j$ is updated only if $(j,\,i)\in\cE$.}
\label{fig:graph_illustration}
\end{figure}

\section{Kinetic description of graph-mediated interactions}
\label{sect:kinetic_graph-mediated}
\subsection{Preliminaries}
Since social networks are among the most prominent examples of graphs encountered in social contexts, and various types of dynamics, often involving user opinions, take place on them, in the following we will keep as reference an application in the realm of opinion formation.  We will call \textit{opinion} a variable $v\in\cO\subseteq\R$, which is part of the microscopic state of a generic agent. However, it should be noted that the developments that follow are completely general and can be applied to other contexts as well. 
 
The opinions of the agents evolve because of interactions with other connected agents. The fundamental assumption we make, as usual in collisional kinetic theory, is that only binary interactions are relevant. In other words, we postulate that interactions among three or more agents are much rarer than those between two agents, so that their effect can be neglected.
 
Let us consider a generic representative agent, whose microscopic state is described by a stochastic process $(X,\,V_t)_{t\geq 0}$. In more detail, $X\in\cI$ is the location of the agent on a graph $\cG=(\cI,\,\cE)$, $\cI$ being the set of vertices and $\cE$ the set of edges of $\cG$. In practice, every agent is a vertex of $\cG$.  We assume that the graph is static, i.e. that connections among the agents do not change in time. Conversely, $V_t:\Omega\to\cO$ is a random variable from an abstract sample space $\Omega$ to the space of the opinions $\cO$ denoting the opinion of the agent at time $t\geq 0$. This random variable evolves in time due to binary interactions with other agents mediated by the connections encoded in $\cE$, thereby producing a stochastic process $\{V_t,\,t\in [0,\,+\infty)\}$. Overall, we describe statistically the microscopic state $(X,\,V_t)$ of the agent by means of a probability measure $f=f(x,v,t)$, which we understand as discrete in $x\in\cI$ and generically continuous in $v\in\cO$. Hence $f$ can be given the form
\begin{equation}
	f(x,v,t)=\frac{1}{N}\sum_{i\in\mathcal{I}}f_i(v,t)\otimes\delta(x-i),
	\label{eq:f}
\end{equation}
where $N=\abs{\cI}$ is the total number of agents, viz. vertices, of the graph while $\delta(\cdot)$ denotes the Dirac delta distribution centred at the origin. Moreover, $f_i=f_i(v,t):\cO\times [0,\,+\infty)\to\R_+$ is the probability density of the opinion $V_t$ of the agent $X=i$. See Figure~\ref{fig:graph_illustration}a. We require
$$ 
\int_\cO f_i(v,t)\,dv=1, \qquad \forall\,t\geq 0,\ \forall\,i\in\cI, 
$$
which implies consistently $\int_\cI\int_\cO f(x,v,t)\,dv\,dx=1$ at all times. Notice that then
$$ \Prob{X=i}=\frac{1}{N}, \qquad \forall\,i\in\cI, $$
meaning that every agent has the same probability to be sampled for an interaction. This corresponds to an \textit{a priori} uniform importance of each agent within the graph. Of course, one may reasonably expect that more connected agents -- the so-called \textit{influencers} -- contribute more to the formation of opinions on the social network. However, we believe that this should be an emergent feature of the model rather than an \textit{ad hoc} assumption.

\subsection{Interaction algorithms}
An interaction algorithm is a rule describing how agents interact in pairs and modify consequently their opinions over time. In particular, in a given time step $\Delta{t}>0$ we assume that an agent $(X,\,V_t)\in\cI\times\cO$ changes their opinion to $V_{t+\Delta{t}}\in\cO$ because of an interaction with another agent $(X^\ast,\,V^\ast_t)\in\cI\times\cO$ according to the following scheme:
\begin{equation}\label{eq:interaction_algorithm} 
V_{t+\Delta{t}}=(1-\Theta)V_t+\Theta V'_t, 
\end{equation} 
where $\Theta\in\{0,\,1\}$ is a random variable taking into account whether the interaction between the two agents actually produces ($\Theta=1$) or not ($\Theta=0$) an opinion change. Furthermore, $V_t'\in\cO$ is the new opinion acquired by agent $(X,\,V_t)$ in consequence of a successful interaction.

In more detail, we let
$$ \Theta\sim\operatorname{Bernoulli}\!\left(B(X,X^\ast)\Delta{t}\right), $$
meaning that the probability of a successful interaction is proportional to the interaction time step $\Delta{t}$ through an \textit{interaction kernel} $B=B(X,X^\ast)$, which encodes the information about the edges of the graph, viz. the connections among the agents. Specifically, we assume:
\begin{equation}
	B(X,X^\ast)=1\quad\text{if } (X,\,X^\ast)\in\cE, \qquad B(X,X^\ast)=0\quad\text{if } (X,\,X^\ast)\not\in\cE,
	\label{eq:B}
\end{equation}
where the ordered pair $(X,\,X^\ast)$ denotes the edge from vertex $X$ to vertex $X^\ast$. For consistency, we require $\Delta{t}\leq 1$, which imposes a limitation on the maximum admissible time step. However, we anticipate that this limitation will be unimportant when considering the continuous time limit $\Delta{t}\to 0^+$.

We model the post-interaction opinion as a random variable $V'_t:\Omega\to\cO$ depending in general on the pre-interaction opinions $V_t$, $V^\ast_t$ of the interacting agents:
$$ V'_t(\omega)=\Psi(V_t(\omega),V^\ast_t(\omega),\omega), \qquad \omega\in\Omega, $$
$\Psi:\cO^2\times\Omega\to\cO$ being a possibly stochastic given function.

Also the agent $(X^\ast,\,V^\ast_t)$ may simultaneously change opinion within the same binary interaction. The way in which this happens may vary depending on the characteristics of the interactions allowed by the social network.

\subsubsection{``Action-reaction'' interactions}
Assume the social network is such that an interaction of the agent in vertex $X$ with the agent in vertex $X^\ast$ necessarily implies also the interaction of the agent in vertex $X^\ast$ with the agent in vertex $X$. Technically, the ``forward'' interaction of $X$ with $X^\ast$ takes place only if the edge $(X,\,X^\ast)$ exists in $\cG$, i.e. if $(X,\,X^\ast)\in\cE$, whereas the ``backwards'' interaction of $X^\ast$ with $X$ may take place regardless of the existence of the edge $(X^\ast,\,X)$ in $\cG$. In other words, while the agent in vertex $X$ needs to decide actively to interact with the agent in vertex $X^\ast$ the latter simply reacts passively to the action of the former. See Figure~\ref{fig:graph_illustration}b.

Therefore, agent $(X^\ast,\,V^\ast_t)$ updates their opinion through a rule analogous to that of agent $(X,\,V_t)$:
$$ V^\ast_{t+\Delta{t}}=(1-\Theta)V^\ast_t+\Theta V^{\ast\prime}_t, $$
in particular with the \textit{same} random variable $\Theta$ whose law depends on $B(X,X^\ast)$ but not on $B(X^\ast,X)$. Furthermore, the interaction kernel $B$ need not be symmetric. The post-interaction opinion
$$ V^{\ast\prime}_t(\omega)=\Psi_\ast(V^\ast_t(\omega),V_t(\omega),\omega), \qquad \omega\in\Omega $$
is defined through a function $\Psi_\ast:\cO^2\times\Omega\to\cO$ possibly different from $\Psi$.

Summarising, for ``action-reaction'' interactions we consider the following general algorithm:
\begin{align}
	\begin{aligned}[c]
		V_{t+\Delta{t}} &= (1-\Theta)V_t+\Theta\Psi(V_t,V^\ast_t,\omega) \\
		V^\ast_{t+\Delta{t}} &= (1-\Theta)V^\ast_t+\Theta\Psi_\ast(V^\ast_t,V_t,\omega).
    \end{aligned}
    \label{eq:int.AR}
\end{align}
The interacting agents $(X,\,V_t)$, $(X^\ast,\,V^\ast_t)$ are sampled randomly and uniformly at each time step.

\subsubsection{``Action-action'' interactions}
Assume instead that the social network allows for an interaction of vertex $X$ with vertex $X^\ast$, and simultaneously of vertex $X^\ast$ with vertex $X$, only if each vertex is explicitly linked to the other, i.e. only if $(X,\,X^\ast),\,(X^\ast,\,X)\in\cE$. In other words, agents in either vertex need to take an action actively to interact with each other, none of them simply reacts to the action of the other. See Figure~\ref{fig:graph_illustration}c.

Then the random variable discriminating whether agent $(X^\ast,\,V^\ast_t)$ updates their opinion in the time step $\Delta{t}$ is in general \textit{different} from $\Theta$:
$$ \Theta_\ast\sim\operatorname{Bernoulli}\!\left(B(X^\ast,X)\Delta{t}\right) $$
and
$$ V^\ast_{t+\Delta{t}}=(1-\Theta_\ast)V^\ast_t+\Theta_\ast V^{\ast\prime}_t $$
with
$$ V^{\ast\prime}_t(\omega)=\Psi_\ast(V^\ast_t(\omega),V_t(\omega),\omega), \qquad \omega\in\Omega. $$
Again, the interaction kernel $B$ need not be symmetric.

Overall, for ``action-action'' interactions we consider the following general algorithm:
\begin{align}
	\begin{aligned}[c]
		V_{t+\Delta{t}} &= (1-\Theta)V_t+\Theta\Psi(V_t,V^\ast_t,\omega) \\
		V^\ast_{t+\Delta{t}} &= (1-\Theta_\ast)V^\ast_t+\Theta_\ast\Psi_\ast(V^\ast_t,V_t,\omega),
    \end{aligned}
    \label{eq:int.AA}
\end{align}
the interacting agents $(X,\,V_t)$, $(X^\ast,\,V^\ast_t)$ being sampled again randomly and uniformly at each time step.

\begin{remark}
\label{rem:undirected.directed}
If the interaction kernel $B$ is symmetric, i.e.
$$ B(X,X^\ast)=B(X^\ast,X), \qquad \forall\,X,\,X^\ast\in\cI, $$
then ``action-reaction'' interactions may be interpreted as ``action-action'' interactions over an undirected graph. Indeed, the symmetry of $B$ together with the presence of $\Theta$ in both interaction rules of algorithm~\eqref{eq:int.AR} implies that vertex $X^\ast$ is linked to vertex $X$ whenever the converse is true and that the edge connecting them is the same in both directions.

On the other hand, for a general non-symmetric interaction kernel $B$ ``action-action'' interactions may be regarded as binary interactions on a directed graph.
\end{remark}

\subsection{Derivation of kinetic equations}
\label{sect:derivation_kin_eq}
A kinetic description of algorithms~\eqref{eq:int.AR},~\eqref{eq:int.AA} amounts to evolution equations for the probability distributions $f_i$ of the opinion of the agents. To derive them, we adopt a classical procedure of the kinetic theory of multi-agent systems, see e.g.,~\cite[Appendix A]{fraia2020RUMI} or~\cite{pareschi2013BOOK}.

In the following, we develop explicit calculations for the case of ``action-reaction'' interactions, i.e. for the interaction algorithm~\eqref{eq:int.AR}. Afterwards, we will indicate the necessary modifications to treat also the case of ``action-action'' interactions with algorithm~\eqref{eq:int.AA}.

Let $\Phi=\Phi(x,v):\cI\times\cO\to\R$ be an arbitrary observable (test function), i.e. any quantity that can be computed out of the knowledge of the microscopic state of a generic representative agent of the system. 
From the first equation in~\eqref{eq:int.AR}, taking the expectation of the post-interaction observable, we have:
\begin{align*}
	\E[\Phi(X,V_{t+\Delta{t}})] &= \E\Bigl[\E\bigl[\Phi\bigl(X,(1-\Theta)V_t+\Theta\Psi(V_t,V^\ast_t,\omega)\bigr)\vert X,X^\ast\bigr]\Bigr] \\
    &= \E\!\left[\Phi(X,V_t)\left(1-B(X,X_\ast)\Delta{t}\right)+\Phi\bigl(X,\Psi(V_t,V^\ast_t,\omega)\bigr)B(X,X^\ast)\Delta{t}\right].
\end{align*}
Rearranging the terms and dividing both sides by $\Delta{t}$ gives 
$$ \frac{\E[\Phi(X,V_{t+\Delta{t}})]-\E[\Phi(X,V_t)]}{\Delta{t}}=\E\left[B(X,X^\ast)\bigl(\Phi(X,\Psi(V_t,V^\ast_t,\omega))-\Phi(X,V_t)\bigr)\right], $$
whence taking the limit $\Delta{t}\to 0^+$ yields formally
\begin{equation}
	\frac{d\E[\Phi(X,V_t)]}{dt}=\E\left[B(X,X^\ast)\bigl(\Phi(X,\Psi(V_t,V^\ast_t,\omega))-\Phi(X,V_t)\bigr)\right].
	\label{eq:expected}
\end{equation}
Similar calculations based on the second equation in~\eqref{eq:int.AR} produce
\begin{equation}
	\frac{d\E[\Phi(X^\ast,V^\ast_t)]}{dt}=\E\left[B(X,X^\ast)\bigl(\Phi(X^\ast,\Psi_\ast(V^\ast_t,V_t,\omega))-\Phi(X^\ast,V^\ast_t)\bigr)\right].
	\label{eq:expected.2}
\end{equation}

We now observe that both pairs $(X,V_t)$ and $(X^\ast,V^\ast_t)$ refer to a generic representative agent of the system, hence they have the same probability law. 
In particular, $\E[\Phi(X,V_t)]=\E[\Phi(X^\ast,V^*_t)]$ so that, summing~\eqref{eq:expected} and~\eqref{eq:expected.2}, we deduce
\begin{align*}
	\frac{d\E[\Phi(X,V_t)]}{dt}=\frac{1}{2}\E\Bigl[B(X,X^\ast)\Bigl(&\Phi(X,\Psi(V_t,V^\ast_t,\omega))+\Phi(X^\ast,\Psi_\ast(V^\ast_t,V_t,\omega)) \\
	&-\Phi(X,V_t)-\Phi(X^\ast,V^\ast_t)\Bigr)\Bigr]
\end{align*}
and finally, making use of the probability measure $f$ to compute the remaining expectations,
\begin{align}
	\resizebox{.94\hsize}{!}{$
	\begin{aligned}[b]
		&\frac{d}{dt}\int_\cI\int_\cO\Phi(x,v)f(x,v,t)\,dv\,dx= \\
		&=\int_{\cI^2}\int_\cO\int_\cO B(x,x_\ast)\frac{\ave{\Phi(x,v')+\Phi(x_\ast,v_\ast')-\Phi(x,v)-\Phi(x_\ast,v_\ast)}}{2}f(x,v,t)f(x_\ast,v_\ast,t)\,dv\,dv_\ast\,dx\,dx_\ast,
	\end{aligned}
	$}
	\label{eq:Boltzmann.f}
\end{align}
where we have denoted
\begin{equation}
	v'=\Psi(v,v_\ast,\omega), \qquad v_\ast'=\Psi_\ast(v_\ast,v,\omega)
	\label{eq:int_rules}
\end{equation}
for brevity and where $\ave{\cdot}$ denotes expectation with respect to the possible stochasticity of the functions $\Psi$, $\Psi_\ast$.

Notice that~\eqref{eq:Boltzmann.f} is required to hold for all test functions $\Phi$. As such, it is a \textit{weak} equation for the distribution function $f$. The corresponding strong form might be given but it is irrelevant for the sequel, therefore we neglect it.

\begin{remark}
Equation~\eqref{eq:Boltzmann.f} is written under the assumption of \textit{propagation of chaos}, meaning that any two potentially interacting agents are sampled independently of each other. This assumption is classically used e.g., in the Boltzmann-type kinetic theory to obtain a closed equation for the one-particle distribution function $f$, as it allows one to factorise the joint probability measure $f_2=f_2(x,x_\ast,v,v_\ast,t)$ of the interacting agents in the product $f(x,v,t)f(x_\ast,v_\ast,t)$.
\end{remark}

From~\eqref{eq:Boltzmann.f}, with a convenient choice of the test function $\Phi$, it is possible to recover a system of (weak) equations for the $f_i$'s. Let $\Phi(x,v)=\phi_i(x)\varphi(v)$, where $\phi_i:\cI\to\R$ is such that $\phi_i(i)=1$ while $\phi_i(x)=0$ for all $x\in\cI\setminus\{i\}$ and $\varphi:\cO\to\R$ is arbitrary. Then, plugging~\eqref{eq:f} into~\eqref{eq:Boltzmann.f} yields:
\begin{align}
        \resizebox{.93\hsize}{!}{$
	\begin{aligned}[b]
	    \frac{d}{dt}\int_\cO\varphi(v)f_i(v,t)\,dv &= \frac{1}{2N}\sum_{j\in\cI}B(i,j)\int_\cO\int_\cO\ave{\varphi(v')-\varphi(v)}f_i(v,t)f_j(v_\ast,t)\,dv\,dv_\ast \\
    		&\phantom{=} +\frac{1}{2N}\sum_{j\in\cI}B(j,i)\int_\cO\int_\cO\ave{\varphi(v_\ast')-\varphi(v_\ast)}f_j(v,t)f_i(v_\ast,t)\,dv\,dv_\ast, \quad i\in\cI.
    	\end{aligned}
        $}
    \label{eq:bg}
\end{align}
This equation can also be obtained, still under the assumption of propagation of chaos, from the kinetic equation derived in~\cite{burger2021network}, which stems from a BBGKY-type hierarchy. Moreover,~\eqref{eq:bg} can be conveniently recast in matrix form by introducing the vector-valued distribution function $\bf(v,t):=(f_i(v,t))_{i\in\cI}$ and the matrix $\bM:=(B(i,j))_{i,j\in\cI}\in\R^{N\times N}$:
\begin{align}
	\begin{aligned}[b]
	    \frac{d}{dt}\int_\cO\varphi(v)\bf(v,t)\,dv &= \frac{1}{2N}\int_\cO\int_\cO\ave{\varphi(v')-\varphi(v)}\bf(v,t)\odot\bM\bf(v_\ast,t)\,dv\,dv_\ast \\
    		&\phantom{=} +\frac{1}{2N}\int_\cO\int_\cO\ave{\varphi(v_\ast')-\varphi(v_\ast)}\bM^T\bf(v,t)\odot\bf(v_\ast,t)\,dv\,dv_\ast,
    \end{aligned}
    \label{eq:AR-f}
\end{align}
where $\odot$ denotes the Hadamard product and $\bM^T$ the transpose matrix of $\bM$. Note that $\mathbf{M}$ is the \textit{adjacency matrix} of $\cG$.

In the case of ``action-action'' interactions, owing to the second equation in~\eqref{eq:int.AA}, equation~\eqref{eq:expected.2} is replaced by
\begin{equation}\label{eq:Boltz_type_AA}  
\frac{d\E[\Phi(X^\ast,V^\ast_t)]}{dt}=\E\left[B(X^\ast,X)\bigl(\Phi(X^\ast,\Psi_\ast(V^\ast_t,V_t,\omega))-\Phi(X^\ast,V^\ast_t)\bigr)\right], 
\end{equation} 

which, added to~\eqref{eq:expected}, produces now
\begin{align*}
	\frac{d}{dt}\int_\cI\int_\cO &\Phi(x,v)f(x,v,t)\,dv\,dx= \\
	&=\frac{1}{2}\int_{\cI^2}\int_\cO\int_\cO B(x,x_\ast)\ave{\Phi(x,v')-\Phi(x,v)}f(x,v,t)f(x_\ast,v_\ast,t)\,dv\,dv_\ast\,dx\,dx_\ast \\
	&\phantom{=} +\frac{1}{2}\int_{\cI^2}\int_\cO\int_\cO B(x_\ast,x)\ave{\Phi(x_\ast,v_\ast')-\Phi(x_\ast,v_\ast)}f(x,v,t)f(x_\ast,v_\ast,t)\,dv\,dv_\ast\,dx\,dx_\ast
\end{align*}
in place of~\eqref{eq:Boltzmann.f} and finally
\begin{align}
	\begin{aligned}[b]
	    \frac{d}{dt}\int_\cO\varphi(v)\bf(v,t)\,dv &= \frac{1}{2N}\int_\cO\int_\cO\ave{\varphi(v')-\varphi(v)}\bf(v,t)\odot\bM\bf(v_\ast,t)\,dv\,dv_\ast \\
    		&\phantom{=} +\frac{1}{2N}\int_\cO\int_\cO\ave{\varphi(v_\ast')-\varphi(v_\ast)}\bM\bf(v,t)\odot\bf(v_\ast,t)\,dv\,dv_\ast,
    \end{aligned}
    \label{eq:AA-f}
\end{align}
in place of~\eqref{eq:AR-f}. Notice that if $B$ is symmetric then so is $\bM$ and hence~\eqref{eq:AR-f} and~\eqref{eq:AA-f} coincide.

\section{Statistical description of the connections}
\label{sect:stat_description}
System~\eqref{eq:bg}, or equivalently~\eqref{eq:AR-f}, describes the evolution of the probability density of the opinion of each user of the social network. Therefore, it is in general not easily amenable to analytical or numerical investigations, because one may reasonably expect that the total number $N$ of agents, hence of kinetic equations, is quite large in all realistic applications.

To get rid of the necessity to track, also at the kinetic level, individual agents, viz. vertices of $\cG$, one may look for the global opinion distribution on the social network, i.e. the $v$-marginal distribution of $f$ in~\eqref{eq:f}:
$$ F(v,t):=\int_\cI f(x,v,t)\,dx=\frac{1}{N}\sum_{i\in\cI}f_i(v,t)=\frac{1}{N}\ones^T\bf(v,t), $$
where $\ones^T=(1,\,\dots,\,1)\in\R^N$.

Let us focus preliminarily on ``action-reaction'' interactions. Premultiplying~\eqref{eq:AR-f} by $\frac{1}{N}\ones^T$ we get
\begin{equation}
	\frac{d}{dt}\int_\cO\varphi(v)F(v,t)\,dv=\frac{1}{2N^2}\int_\cO\int_\cO\ave{\varphi(v')+\varphi(v_\ast')-\varphi(v)-\varphi(v_\ast)}\bf^T(v,t)\bM\bf(v_\ast,t)\,dv\,dv_\ast
	\label{eq:F}
\end{equation}
which however is not a closed equation for $F$, because the right-hand side still requires the detailed knowledge of the connections of the graph and of the opinion distribution of each agent.

In order to draw self-consistent information from~\eqref{eq:F}, we restrict at first to classes of sufficiently simple interaction dynamics, yet capable of giving rise to non-trivial collective trends. To this purpose, we introduce the following

\begin{definition}
\label{def:polarised_memory}
We say that an interaction rule
$$ v'=\Psi(v,v_\ast,\omega) $$
has \textit{perfect memory} if $\Psi$ is constant w.r.t. $v_\ast$, so that the post-interaction opinion $v'$ of the agent with pre-interaction opinion $v$ is independent of the opinion $v_\ast$ of the other interacting agent.

Conversely, we say that the above interaction rule is \textit{memoryless} if $\Psi$ is constant w.r.t. $v$, so that the post-interaction opinion $v'$ of the agent with pre-interaction opinion $v$ depends only on the opinion $v_\ast$ of the other interacting agent (plus possibly independent stochastic effects).

We call \textit{polarised memory} interactions the class of perfect memory and memoryless interactions.
\end{definition}

\begin{remark}
In the context of opinion formation, the term \textit{memory} introduced in Definition~\ref{def:polarised_memory} may be understood as a synonym of \textit{conviction}. Specifically, a perfect memory interaction is one in which an agent has a so strong conviction that they disregard completely the opinion of the other agent when forming their own post-interaction opinion. This is, for instance, the case of opinion leaders. Conversely, a memoryless interaction is one in which an agent has a so weak conviction that their post-interaction opinion is influenced completely by the opinion of the other agent and is independent of their own pre-interaction opinion. This may happen e.g., among opinion followers.

In the following, however, we keep the term ``memory'' as it is somehow more general and, as such, more easily adaptable, in the abstract, also to applications different from opinion formation.
\end{remark}

\begin{example}\label{ex:sznajd}
In the Ochrombel simplification~\cite{ochrombel2001IJMP} of the Sznajd model~\cite{sznajd-weron2000IJMP} of opinion dynamics, the interaction rules are
$$ v'=v, \qquad v_\ast'=v. $$
They are both polarised memory interactions and, in particular, the first rule is a perfect memory one while the second rule is memoryless.
\label{ex:Ochrombel}
\end{example}

Motivated by Example~\ref{ex:Ochrombel}, let us focus on interaction rules~\eqref{eq:int_rules} with $\Psi=\Psi(v,\omega)$ (perfect memory) and $\Psi_\ast=\Psi_\ast(v,\omega)$ (memoryless). If this is the case, the right-hand side of~\eqref{eq:F} may be rewritten as
\begin{multline*}
	\frac{1}{2N^2}\int_\cO\int_\cO\ave{\varphi(v')+\varphi(v_\ast')-\varphi(v)-\varphi(v_\ast)}\bf^T(v,t)\bM\bf(v_\ast,t)\,dv\,dv_\ast= \\
	=\frac{1}{2N^2}\int_\cO\ave{\varphi(v')+\varphi(v_\ast')-\varphi(v)}\bf^T(v,t)\,dv\ \bM\ones-\ones^T\bM\ \frac{1}{2N^2}\int_\cO\varphi(v_\ast)\bf(v_\ast,t)\,dv_\ast.
\end{multline*}

Let
\begin{equation}
	\bw^-:=\bM^T\ones, \qquad \bw^+:=\bM\ones,
	\label{eq:w-_w+}
\end{equation}
be the vectors of the incoming ($-$) and outgoing ($+$) degrees, viz. the numbers of incoming and outgoing edges of each of the vertices of the graph $\cG$. 
Using them and taking advantage of the previous calculations, we rewrite~\eqref{eq:F} as
\begin{equation}
	\frac{d}{dt}\int_\cO\varphi(v)F(v,t)\,dv=\frac{1}{N^2}\int_\cO\ave{(\bw^+)^T\frac{\varphi(v')+\varphi(v_\ast')}{2}-\frac{(\bw^-)^T+(\bw^+)^T}{2}\varphi(v)}\bf(v,t)\,dv.
    \label{eq:ryan_equation}
\end{equation}

Now, the crucial point to obtain a kinetic formulation free from references to single vertices, viz. individual agents, is to augment the space of microscopic states to include also information on the connections of a generic representative vertex of the graph $\cG$. This way, such an information will be amenable to a statistical description. This may be difficult to do in general in~\eqref{eq:F}, while it is doable in~\eqref{eq:ryan_equation} because here the adjacency matrix $\bM$ of $\cG$ is, in a sense, lumped in $\bw^-$, $\bw^+$.

For $i\in\cI$, let $\indeg{i},\,\outdeg{i}\in\{0,\,\dots,\,N\}$ be the incoming and outgoing, respectively, degrees of vertex $i$. In other words, they are the $i$-th components of the vectors $\bw^-$, $\bw^+$, respectively. We introduce the function
$$ g_N=g_N(v,w^-,w^+,t):\cO\times\{0,\,\dots,\,N\}\times\{0,\,\dots,\,N\}\times\R_+\to\R_+ $$
meant as the probability mass function of the event that at time $t$ an agent with incoming degree $w^-$ and outgoing degree $w^+$ has opinion $v$. Owing to this definition, $g_N$ is linked to the $f_i$'s by the relationship
\begin{equation}
	g_N(v,w^-,w^+,t)=\frac{1}{N}\sum_{\substack{i\in\cI \\ \indeg{i}=w^-, \\ \outdeg{i}=w^+}}f_i(v,t)
    \label{eq:g}
\end{equation}
whence we deduce
\begin{align}
	\begin{aligned}[c]
	    F(v,t) &= \sum_{w^+=0}^{N}\sum_{w^-=0}^{N}g_N(v,w^-,w^+,t), \\
    		(\bw^-)^T\bf(v,t) &= N\sum_{w^+=0}^{N}\sum_{w^-=0}^{N}w^-g_N(v,w^-,w^+,t), \\
	    (\bw^+)^T\bf(v,t) &= N\sum_{w^+=0}^{N}\sum_{w^-=0}^{N}w^+g_N(v,w^-,w^+,t).
	\end{aligned}
	\label{eq:w.f}
\end{align}
Therefore, we may rewrite~\eqref{eq:ryan_equation} in terms of $g_N$ as
\begin{multline}
	\frac{d}{dt}\sum_{w^+=0}^{N}\sum_{w^-=0}^{N}\int_\cO\varphi(v)g_N(v,w^-,w^+,t)\,dv \\
	=\frac{1}{N}\sum_{w^+=0}^{N}\sum_{w^-=0}^{N}\int_\cO\ave{w^+\frac{\varphi(v')+\varphi(v_\ast')}{2}-\frac{w^-+w^+}{2}\varphi(v)}g_N(v,w^-,w^+,t)\,dv.
	\label{eq:to_be_continued}
\end{multline}
Let now
$$ \tilde{w}^-:=\frac{w^-}{N}, \qquad \tilde{w}^+:=\frac{w^+}{N} $$
be the normalised incoming and outgoing degrees of a generic representative vertex of $\cG$ and let us define
$$ \tilde{g}(v,\tilde{w}^-,\tilde{w}^+,t):=N^2g_N(v,N\tilde{w}^-,N\tilde{w}^+,t). $$
Notice that
$$ \tilde{w}^-,\,\tilde{w}^+\in\cW:=\left\{\frac{k}{N},\,k=0,\,1,\,\dots,\,N\right\} $$
and that the step between any two consecutive elements of $\cW$ is constant and equal to $\frac{1}{N}$. Therefore, introducing
$$ \Delta{\tilde{w}^-}=\Delta{\tilde{w}^+}:=\frac{1}{N}, $$
it results
$$ \sum_{\tilde{w}^+\in\cW}\sum_{\tilde{w}^-\in\cW}\int_\cO\tilde{g}(v,\tilde{w}^-,\tilde{w}^+,t)\,dv\,\Delta{\tilde{w}^-}\,\Delta{\tilde{w}^+}=
	\sum_{w^+=0}^{N}\sum_{w^-=0}^{N}\int_\cO g_N(v,w^-,w^+,t)\,dv=1, $$
which allows us to understand $\tilde{g}$ as a $\tilde{w}^-,\,\tilde{w}^+$-piecewise constant probability density function for all $N$.

Using $\tilde{g}$ to further elaborate~\eqref{eq:to_be_continued}, we may write
\begin{multline*}
	\frac{d}{dt}\sum_{\tilde{w}^+\in\cW}\sum_{\tilde{w}^-\in\cW}\int_\cO\varphi(v)\tilde{g}(v,\tilde{w}^-,\tilde{w}^+,t)\,dv\,\Delta{\tilde{w}^-}\,\Delta{\tilde{w}^+} \\
	=\sum_{\tilde{w}^+\in\cW}\sum_{\tilde{w}^-\in\cW}
		\int_\cO\ave{\tilde{w}^+\frac{\varphi(v')+\varphi(v_\ast')}{2}-\frac{\tilde{w}^-+\tilde{w}^+}{2}\varphi(v)}\tilde{g}(v,\tilde{w}^-,\tilde{w}^+,t)\,dv\,\Delta{\tilde{w}^-}\,\Delta{\tilde{w}^+},
\end{multline*}
which is a self-consistent equation for $\tilde{g}$. Assuming now that the size $N$ of the graph is large, i.e. letting $N\to\infty$, we transform $\tilde{w}^-$, $\tilde{w}^+$ into continuous variables ranging in the interval $[0,\,1]$ and from the previous equation we get formally (dropping the tildes for ease of notation)
\begin{multline}
	\frac{d}{dt}\int_0^1\int_0^1\int_\cO\varphi(v)g(v,w^-,w^+,t)\,dv\,dw^-\,dw^+ \\
	=\int_0^1\int_0^1\int_\cO\ave{w^+\frac{\varphi(v')+\varphi(v_\ast')}{2}-\frac{w^-+w^+}{2}\varphi(v)}g(v,w^-,w^+,t)\,dv\,dw^-\,dw^+.
	\label{eq:Boltzmann.g-AR}
\end{multline}

With~\eqref{eq:Boltzmann.g-AR} we have upscaled, in a formally rigorous way, a microscopic description based on single agents and their individual connections in the graph to a simplified aggregate description in which the graph appears only through the continuous statistical distribution of the normalised degrees of its vertices, at least in the case of interactions with polarised memory.

The same procedure applied to~\eqref{eq:AA-f} for the case of ``action-action'' interactions yields
\begin{multline}
	\frac{d}{dt}\int_0^1\int_0^1\int_\cO\varphi(v)g(v,w^-,w^+,t)\,dv\,dw^-\,dw^+ \\
	=\int_0^1\int_0^1\int_\cO\ave{\frac{w^+\varphi(v')+w^-\varphi(v_\ast')}{2}-w^+\varphi(v)}g(v,w^-,w^+,t)\,dv\,dw^-\,dw^+
	\label{eq:Boltzmann.g-AA}
\end{multline}
in place of~\eqref{eq:Boltzmann.g-AR}.

\begin{remark}
If the interaction kernel $B$ is symmetric then so is the adjacency matrix $\bM$ of $\cG$ and consequently $\bw^-=\bw^+$. In this case, we may describe the degree of a vertex of $\cG$ by a single variable $w:=w^-=w^+$ and replace the distribution function $g$ by
$$ \bar{g}(v,w,t):=\int_0^1g(v,w,w^+,t)\,dw^+, $$
so that both~\eqref{eq:Boltzmann.g-AR} and~\eqref{eq:Boltzmann.g-AA} reduce to
\begin{equation}
	\frac{d}{dt}\int_0^1\int_\cO\varphi(v)\bar{g}(v,w,t)\,dv\,dw
		=\int_0^1\int_\cO w\ave{\frac{\varphi(v')+\varphi(v_\ast')}{2}-\varphi(v)}\bar{g}(v,w,t)\,dv\,dw.
	\label{eq:Boltzmann.B_symm}
\end{equation}
This equation describes binary interactions, one of which is perfect memory while the other is memoryless (like in the Ochrombel model, cf. Example~\ref{ex:Ochrombel}), on an undirected graph, cf. Remark~\ref{rem:undirected.directed}.
\end{remark}

\section{Equivalent Boltzmann-type descriptions}
\label{sect:equiv_boltz}
In this section, we show that~\eqref{eq:Boltzmann.g-AR},~\eqref{eq:Boltzmann.g-AA} can be recovered as particular instances of general Boltzmann-type equations for a generic system of interacting agents, independently of the construction which takes into account explicitly the graph structure of the interactions. As we will see, to this purpose the key point is twofold: on one hand, it is necessary to characterise the microscopic state of the agents conveniently, in particular not only by their opinion $v$ but also by their (normalised) incoming and outgoing degrees $w^-$, $w^+$. On the other hand, it is fundamental to identify an appropriate expression of the collision kernel of the Boltzmann-type equation in terms of the degrees $w^-$, $w^+$ of a generic representative agent of the system.

The equivalence of~\eqref{eq:Boltzmann.g-AR},~\eqref{eq:Boltzmann.g-AA} with classical Boltzmann-type equations corroborates rigorously a commonly accepted heuristic assumption made in the literature, according to which networked interactions may be described statistically by invoking the concept of \textit{connectivity} of the agents, see e.g.,~\cite{he2023JAMP,loy2022PTRSA,toscani2018PRE}. In those contexts, the connectivity is understood as a measure of the number of connections of an agent in a (large) graph of agents.

\subsection{The case of ``action-reaction'' interactions}
``Action-reaction'' interactions, cf.~\eqref{eq:int.AR}, are binary interactions similar to those of the classical collisional kinetic theory. Therefore, we may refer to a classical Boltzmann-type kinetic equation, which for agents described by a generic microscopic state $\bv\in\cV\subseteq\R^d$ writes in weak form as (cf.~\cite{pareschi2013BOOK})
\begin{align}
	\begin{aligned}[b]
		\frac{d}{dt}&\int_\cV\Phi(\bv)g(\bv,t)\,d\bv \\
		&= \frac{1}{2}\int_{\cV^2}b(\bv,\bv_\ast)\ave{\Phi(\bv')+\Phi(\bv_\ast')-\Phi(\bv)-\Phi(\bv_\ast)}
				g(\bv,t)g(\bv_\ast,t)\,d\bv\,d\bv_\ast
    \end{aligned}
	\label{eq:std_Boltz-AR}
\end{align}
for every observable $\Phi:\cV\to\R$. The function $b:\cV^2\to\R_+$ is the collision kernel.

Now, let us choose $\bv=(v,\,w^-,\,w^+)$ with $\cV=\mathcal{O}\times [0,\,1]\times [0,\,1]$ and
\begin{equation}
	b(\bv,\bv_\ast)=\mu w^+w^-_\ast,
	\label{eq:b}
\end{equation}
where $\mu>0$ is a proportionality constant. Moreover, let us consider a $v$-dependent only observable, i.e. $\Phi(\bv)=\varphi(v)$. With the further assumption of polarised memory interactions,~\eqref{eq:std_Boltz-AR} becomes
\begin{align}
	\begin{aligned}[b]
		&\frac{d}{dt}\int_0^1\int_0^1\int_\cO\varphi(v)g(v,w^-,w^+,t)\,dv\,dw^-\,dw^+= \\
		&= \frac{\mu}{2}\int_{[0,\,1]^3}w^+w_\ast^-\left(\int_\cO\ave{\varphi(v')+\varphi(v_\ast')-\varphi(v)}g(v,w^-,w^+,t)\,dv\right)h^-(w_\ast^-)\,dw^-\,dw^+\,dw_\ast^- \\
		&\phantom{=} -\frac{\mu}{2}\int_{[0,\,1]^3}w^+w_\ast^-\left(\int_\cO\varphi(v_\ast)g(v_\ast,w_\ast^-,w_\ast^+,t)\,dv_\ast\right)h^+(w^+)\,dw^+\,dw_\ast^-\,dw_\ast^+,
	\end{aligned}
	\label{eq:std_Boltz-AR.1}
\end{align}
where
$$ h^-(w^-):=\int_0^1\int_\cO g(v,w^-,w^+,t)\,dv\,dw^+, \qquad h^+(w^+):=\int_0^1\int_\cO g(v,w^-,w^+,t)\,dv\,dw^- $$
are the marginal distributions of the (normalised) incoming and outgoing degrees, respectively, which do not change in time due to the assumption of static graph. Upon introducing the mean (normalised) incoming and outgoing degrees:
$$ m^-:=\int_0^1w^-h^-(w^-)\,dw^-, \qquad m^+:=\int_0^1w^+h^+(w^+)\,dw^+, $$
equation~\eqref{eq:std_Boltz-AR.1} can be written in the form
\begin{align}
	\begin{aligned}[b]
		\frac{d}{dt}&\int_0^1\int_0^1\int_\cO\varphi(v)g(v,w^-,w^+,t)\,dv\,dw^-\,dw^+= \\
		&= \mu\int_0^1\int_0^1\int_\cO\ave{m^-w^+\frac{\varphi(v')+\varphi(v_\ast')}{2}-\frac{m^+w^-+m^-w^+}{2}\varphi(v)}g(v,w^-,w^+,t)\,dv\,dw^-\,dw^+.
	\end{aligned}
	\label{eq:std_Boltz-AR.2}
\end{align}

To proceed further, we need the following result, which establishes that the average normalised incoming and outgoing degrees are the same, as it is the case for finite graphs.
\begin{lemma}
\label{lemma:m-m+}
It holds that $m^-=m^+$.
\end{lemma}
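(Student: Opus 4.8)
The plan is to descend from the continuum back to the finite-$N$ level, where the claim reduces to the elementary fact that a directed graph has as many incoming as outgoing edge endpoints, and then to push this identity through the rescaling $g_N\mapsto\tilde g$ and the limit $N\to\infty$ used to obtain \eqref{eq:Boltzmann.g-AR}.

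First I would work at fixed $N$. Integrating the second and third relations in \eqref{eq:w.f} over $v\in\cO$ and using $\int_\cO f_i(v,t)\,dv=1$ for every $i\in\cI$, i.e. $\int_\cO\bf(v,t)\,dv=\ones$, gives
\[
\sum_{w^+=0}^{N}\sum_{w^-=0}^{N}w^-\!\int_\cO g_N(v,w^-,w^+,t)\,dv=\tfrac{1}{N}(\bw^-)^T\ones=\tfrac{1}{N}\ones^T\bM\ones=\tfrac{1}{N}\abs{\cE},
\]
and likewise $\sum_{w^+}\sum_{w^-}w^+\int_\cO g_N\,dv=\tfrac{1}{N}(\bw^+)^T\ones=\tfrac{1}{N}\ones^T\bM^T\ones=\tfrac1N\abs{\cE}$, where $\abs{\cE}=\sum_{i,j}M_{ij}=\sum_{i,j}M_{ji}$ is the total number of edges (equivalently, one uses that $\ones^T\bM^T\ones$ is a scalar, hence equal to $\ones^T\bM\ones$). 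Thus the two double sums coincide for every $N$ and every $t\geq 0$: this is the only substantive input, and it is just the directed handshake identity $\ones^T\bw^-=\ones^T\bw^+$.

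Next I would rewrite this equality in terms of the rescaled density $\tilde g(v,\tilde w^-,\tilde w^+,t)=N^2g_N(v,N\tilde w^-,N\tilde w^+,t)$, using $\tilde w^\pm=w^\pm/N$, $\Delta\tilde w^-=\Delta\tilde w^+=1/N$, and $\sum_{w=0}^N(\cdot)=N\sum_{\tilde w\in\cW}(\cdot)\,\Delta\tilde w$; the powers of $N$ cancel on the two sides, leaving
\[
\sum_{\tilde w^+\in\cW}\sum_{\tilde w^-\in\cW}\tilde w^-\!\int_\cO\tilde g\,dv\,\Delta\tilde w^-\,\Delta\tilde w^+=\sum_{\tilde w^+\in\cW}\sum_{\tilde w^-\in\cW}\tilde w^+\!\int_\cO\tilde g\,dv\,\Delta\tilde w^-\,\Delta\tilde w^+.
\]
Letting $N\to\infty$ and dropping the tildes exactly as in the passage leading to \eqref{eq:Boltzmann.g-AR}, the left-hand side converges formally to $\int_0^1\!\int_0^1\!\int_\cO w^-g(v,w^-,w^+,t)\,dv\,dw^-\,dw^+=\int_0^1 w^-h^-(w^-)\,dw^-=m^-$ by the definition of $h^-$, and the right-hand side to $m^+$. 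Hence $m^-=m^+$.

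I do not expect a genuine obstacle here: the identity is exact at every finite $N$, so it trivially survives the limit, and the only care needed is the bookkeeping of the $N$-powers in the rescaling $g_N\mapsto\tilde g$ and the recognition of the two limiting integrals as the means $m^\mp$ through the marginals $h^\mp$. In fact one could argue in a single line — $m^-$ and $m^+$ are both the $N\to\infty$ limit of $\tfrac{1}{N^2}\sum_{i\in\cI}\indeg{i}=\tfrac{1}{N^2}\sum_{i\in\cI}\outdeg{i}=\abs{\cE}/N^2$ — but routing the argument through \eqref{eq:w.f} keeps the connection with the continuum density $g$ explicit and consistent with the rest of the section.
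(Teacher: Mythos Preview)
Your proposal is correct and follows essentially the same route as the paper: establish the equality at finite $N$ via the handshake identity (the paper phrases it as $\sum_{i\in\cI}\indeg{i}=\sum_{i\in\cI}\outdeg{i}$, you phrase it as $\ones^T\bM\ones=\ones^T\bM^T\ones$, which is the same fact), rewrite in terms of $\tilde g$, and pass formally to the limit $N\to\infty$ to identify the two sides with $m^-$ and $m^+$. Your one-line remark at the end is in fact exactly how the paper organises the finite-$N$ step, so there is no substantive difference between the two arguments.
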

\begin{proof}
Let us consider at first a finite graph with $N<\infty$ vertices. If $\tilde{w}^-,\,\tilde{w}^+\in\cW=\{\frac{k}{N},\,k=0,\,1,\,\dots,\,N\}$ are the normalised incoming and outgoing degrees of a generic vertex and if $\tilde{m}^-_N$, $\tilde{m}^+_N$ denote the mean normalised incoming and outgoing degrees of the graph then
\begin{align*}
	\tilde{m}^-_N &= \sum_{\tilde{w}^-\in\cW}\sum_{\tilde{w}^+\in\cW}\int_\cO\tilde{w}^-\tilde{g}(v,\tilde{w}^-,\tilde{w}^+,t)\,dv\,\Delta{\tilde{w}}^-\,\Delta{\tilde{w}}^+, \\
	\tilde{m}^+_N &= \sum_{\tilde{w}^+\in\cW}\sum_{\tilde{w}^-\in\cW}\int_\cO\tilde{w}^+\tilde{g}(v,\tilde{w}^-,\tilde{w}^+,t)\,dv\,\Delta{\tilde{w}}^-\,\Delta{\tilde{w}}^+.
\end{align*}
On the other hand, we clearly also have
\begin{align*}
	\tilde{m}^-_N &= \frac{1}{N}\sum_{\tilde{w}^-\in\cW}\tilde{w}^-\,\#\{i\in\cI\,:\,\indeg{i}=N\tilde{w}^-\}=\frac{1}{N^2}\sum_{i\in\cI}\indeg{i}, \\
	\tilde{m}^+_N &= \frac{1}{N}\sum_{\tilde{w}^+\in\cW}\tilde{w}^+\,\#\{i\in\cI\,:\,\outdeg{i}=N\tilde{w}^+\}=\frac{1}{N^2}\sum_{i\in\cI}\outdeg{i},
\end{align*}
where $\#$ denotes the cardinality of a set. Since, in any graph, $\sum_{i\in\cI}\indeg{i}=\sum_{i\in\cI}\outdeg{i}$, we get $\tilde{m}^-_N=\tilde{m}^+_N$, hence
$$ \resizebox{\hsize}{!}{$
	\displaystyle{
	\sum_{\tilde{w}^-\in\cW}\sum_{\tilde{w}^+\in\cW}\int_\cO\tilde{w}^-\tilde{g}(v,\tilde{w}^-,\tilde{w}^+,t)\,dv\,\Delta{\tilde{w}}^-\,\Delta{\tilde{w}}^+
		=\sum_{\tilde{w}^+\in\cW}\sum_{\tilde{w}^-\in\cW}\int_\cO\tilde{w}^+\tilde{g}(v,\tilde{w}^-,\tilde{w}^+,t)\,dv\,\Delta{\tilde{w}}^-\,\Delta{\tilde{w}}^+.
	}
	$} $$
Passing formally to the limit $N\to\infty$ yields then
$$ \int_0^1\int_0^1\int_\cO w^-g(v,w^-,w^+,t)\,dv\,dw^-\,dw^+=\int_0^1\int_0^1\int_\cO w^+g(v,w^-,w^+,t)\,dv\,dw^-\,dw^+, $$
whence, recalling the definitions of $h^-$, $h^+$,
$$ \int_0^1w^-h^-(w^-)\,dw^-=\int_0^1w^+h^+(w^+)\,dw^+ $$
and the thesis follows.
\end{proof}

Owing to Lemma~\ref{lemma:m-m+}, equation~\eqref{eq:std_Boltz-AR.2} becomes
\begin{align*}
	\frac{d}{dt}&\int_0^1\int_0^1\int_\cO\varphi(v)g(v,w^-,w^+,t)\,dv\,dw^-\,dw^+= \\
	&= m\mu\int_0^1\int_0^1\int_\cO\ave{w^+\frac{\varphi(v')+\varphi(v_\ast')}{2}-\frac{w^-+w^+}{2}\varphi(v)}g(v,w^-,w^+,t)\,dv\,dw^-\,dw^+,
\end{align*}
where $m:=m^-=m^+$, which coincides with~\eqref{eq:Boltzmann.g-AR} upon letting $\mu=1/m$.

\begin{figure}
\centering
\includegraphics[width=0.6\linewidth]{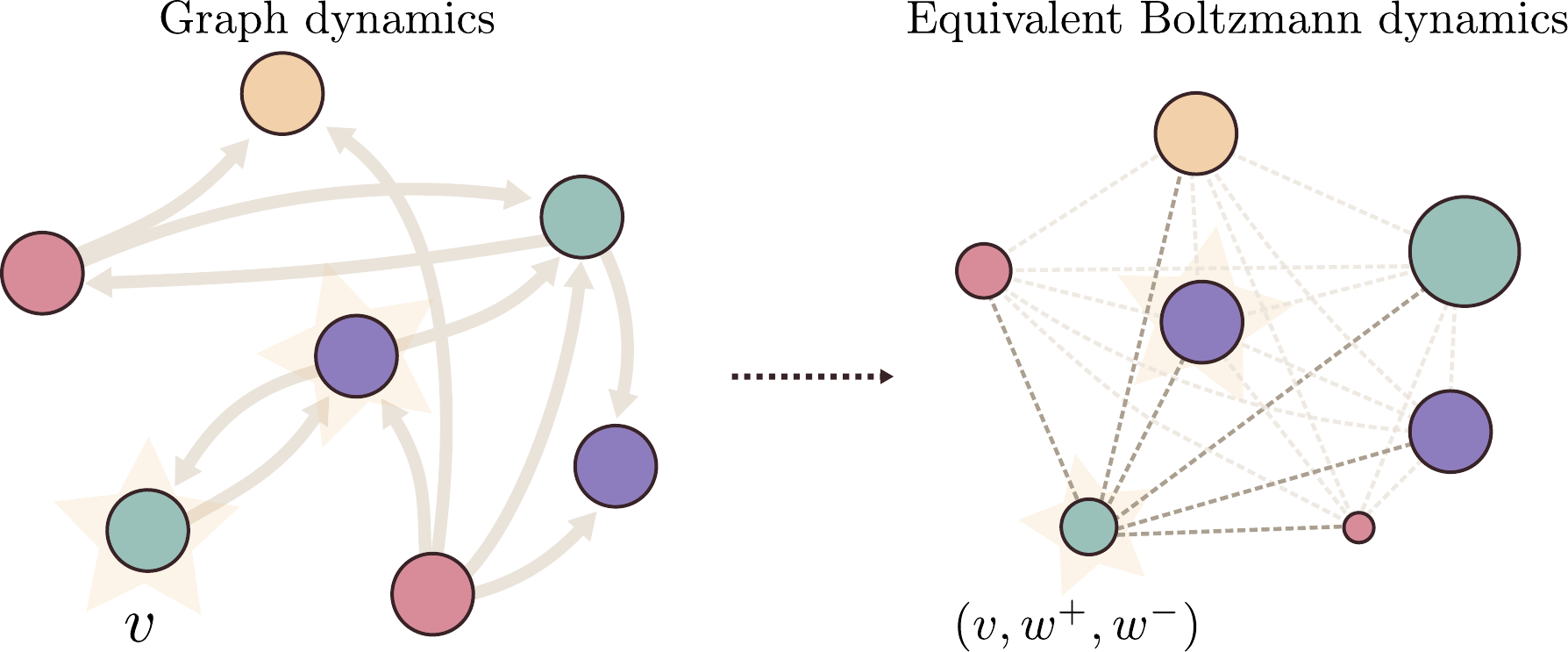}
\caption{Visual representation of the equivalence between the network dynamics and the equivalent Boltzmann one. A situation in which the interactions between the agents are regulated by a network structure (left panel) is replaced, without approximation, by one in which every agent can interact with any other agent (right panel).}
\label{fig:equiv_Boltzmann}
\end{figure}

This shows that, in the case of ``action-reaction'' rules, the collective dynamics emerging from networked interactions may be equivalently described by all-to-all interactions within a Boltzmann-type framework, cf. Figure~\ref{fig:equiv_Boltzmann}, where the interaction rate (viz. collision kernel) is proportional to the product of the incoming and outgoing degrees of the interacting agents, cf.~\eqref{eq:b}. Hence, those dynamics may be faithfully reproduced even if the detailed graph of the agent connections is unknown, provided the statistical distribution of the incoming and outgoing degrees is known.

\subsection{The case of ``action-action'' interactions}
``Action-action'' interactions, cf.~\eqref{eq:int.AA}, are pairwise interactions which, unlike those of the classical kinetic theory, need not produce a simultaneous change of microscopic state of the interacting agents. Technically, the reason is that the two rules in~\eqref{eq:int.AA} feature two different random variables $\Theta$, $\Theta_\ast$ with a law differing essentially in the interaction rate $B$, which might not be symmetric. In order to seek a parallelism with a graph-free Boltzmann-type kinetic description, it is therefore necessary to refer to a generalisation of the Boltzmann-type equation~\eqref{eq:std_Boltz-AR}, which takes into account a possibly non-symmetric collision kernel. Such an equation writes in weak form as
\begin{align}
	\begin{aligned}[b]
		\frac{d}{dt}\int_\cV\Phi(\bv)g(\bv,t)\,d\bv &= \frac{1}{2}\int_{\cV^2}b(\bv,\bv_\ast)\ave{\Phi(\bv')-\Phi(\bv)}g(\bv,t)g(\bv_\ast,t)\,d\bv\,d\bv_\ast \\
   		&\phantom{=} +\frac{1}{2}\int_{\cV^2}b(\bv_\ast,\bv)\ave{\Phi(\bv_\ast')-\Phi(\bv_\ast)}g(\bv,t)g(\bv_\ast,t)\,d\bv\,d\bv_\ast
    \end{aligned}
    \label{eq:std_Boltz-AA}
\end{align}
for every observable $\Phi:\cV\to\R$.

Letting again $\bv=(v,\,w^-,\,w^+)\in\cV=\cO\times [0,\,1]\times [0,\,1]$ and choosing $b$ like in~\eqref{eq:b} we obtain, with $\Phi(\bv)=\varphi(v)$ and after some computations similar to those performed in~\eqref{eq:std_Boltz-AR.1},~\eqref{eq:std_Boltz-AR.2},
\begin{align*}
	&\frac{d}{dt}\int_0^1\int_0^1\int_\cO\varphi(v)g(v,w^-,w^+)\,dv\,dw^-\,dw^+= \\
	&= \mu\int_0^1\int_0^1\int_\cO\ave{\frac{m^-w^+\varphi(v')+m^+w^-\varphi(v_\ast')}{2}-m^-w^+\varphi(v)}g(v,w^-,w^+,t)\,dv\,dw^-\,dw^+.
\end{align*}
Finally, setting $\mu=1/m$ with $m:=m^-=m^+$ (cf. Lemma~\ref{lemma:m-m+}) we recover~\eqref{eq:Boltzmann.g-AA}.

\section{More general interactions}
\label{sect:general_interactions}
It is natural to inquire whether the equivalence between graph-mediated kinetic equations such as~\eqref{eq:Boltzmann.g-AR},~\eqref{eq:Boltzmann.g-AA} and Boltzmann-type equations~\eqref{eq:std_Boltz-AR},~\eqref{eq:std_Boltz-AA} may be extended to non-polarised interactions. In particular, we focus on \textit{linear} interactions:
\begin{equation}
   v'=pv+qv_\ast, \qquad v_\ast'=p_\ast v_\ast+q_\ast v,
   \label{eq:lin_int}
\end{equation}
which provide a sufficiently simple, yet powerful, model for a variety of applications. In~\eqref{eq:lin_int}, $p,\,q,\,p_\ast,\,q_\ast$ are either deterministic or random coefficients chosen so that $v',\,v_\ast'\in\cO$ for all $v,\,v_\ast\in\cO$.

Sticking to the classical spirit of kinetic theory, let us consider ``action-reaction'' interactions, in which to every interaction of the $v$-agent with the $v_\ast$-agent there corresponds a simultaneous interaction of the $v_\ast$-agent with the $v$-agent. The case of ``action-action'' interactions may be treated similarly.

Starting from the graph-mediate kinetic equation~\eqref{eq:F}, by means of a procedure analogous to that of Section~\ref{sect:stat_description} we obtain the following equation:
\begin{align}
	\begin{aligned}[b]
		\frac{d}{dt} &\underbrace{\sum_{\tilde{w}^+\in\cW}\sum_{\tilde{w}^-\in\cW}\int_\cO\varphi(v)\tilde{g}(v,\tilde{w}^-,\tilde{w}^+,t)\,dv\,\Delta{\tilde{w}}^-\,\Delta{\tilde{w}}^+}_{\Circled{\mathrm{I}}} \\
		&= \underbrace{\frac{1}{2N^2}\int_\cO\int_\cO\ave{\varphi(v')+\varphi(v_\ast')}\bf^T(v,t)\bM\bf(v_\ast,t)\,dv\,dv_\ast}_{\Circled{\mathrm{II}}} \\
		&\phantom{=} -\underbrace{\frac{1}{2}\sum_{\tilde{w}^+\in\cW}\sum_{\tilde{w}^-\in\cW}\int_\cO\tilde{w}^+\varphi(v)
			\tilde{g}(v,\tilde{w}^-,\tilde{w}^+,t)\,dv\,\Delta{\tilde{w}}^-\,\Delta{\tilde{w}}^+}_{\Circled{\mathrm{III}}} \\
		&\phantom{=} -\underbrace{\frac{1}{2}\sum_{\tilde{w}_\ast^+\in\cW}\sum_{\tilde{w}_\ast^-\in\cW}\int_\cO\tilde{w}_\ast^-\varphi(v_\ast)
			\tilde{g}(v_\ast,\tilde{w}_\ast^-,\tilde{w}_\ast^+,t)\,dv_\ast\,\Delta{\tilde{w}_\ast}^-\,\Delta{\tilde{w}_\ast}^+}_{\Circled{\mathrm{IV}}}.
	\end{aligned}
	\label{eq:gtilde.lin_int}
\end{align}

While it is clear that, in the limit $N\to\infty$, the terms \Circled{I}, \Circled{III}, \Circled{IV} converge formally to
\begin{align*}
	\Circled{\mathrm{I}} &\to \int_0^1\int_0^1\int_\cO\varphi(v)g(v,w^-,w^+,t)\,dv\,dw^-\,dw^+ \\
	\Circled{\mathrm{III}} &\to \frac{1}{2}\int_0^1\int_0^1\int_\cO w^+\varphi(v)g(v,w^-,w^+,t)\,dv\,dw^-\,dw^+ \\
	\Circled{\mathrm{IV}} &\to \frac{1}{2}\int_0^1\int_0^1\int_\cO w_\ast^-\varphi(v_\ast)g(v_\ast,w_\ast^-,w_\ast^+,t)\,dv_\ast\,dw_\ast^-\,dw_\ast^+,
\end{align*}
where we dropped the tildes for ease of notation, it is much harder to identify the limit, if any, of \Circled{II}, because this term depends on the detailed microscopic couplings of the agents on the graph as encoded in the adjacency matrix $\bM$. In general, one may expect that \Circled{II} cannot be rewritten straightforwardly in terms of the lumped information on the graph connections brought by the incoming and outgoing degrees of the vertices.

Nevertheless, it turns out that for particular choices of the observable $\varphi$ some limit aggregate information may be obtained from~\eqref{eq:gtilde.lin_int}. Let $\varphi$ be linear, then invoking~\eqref{eq:lin_int} we discover:
\begin{align*}
	\Circled{\mathrm{II}} &= \frac{1}{2N^2}\left(\ave{p+q_\ast}\int_\cO\varphi(v)\bf^T(v,t)\bM\ones\,dv+\ave{p_\ast+q}\int_\cO\varphi(v_\ast)\ones^T\bM\bf(v_\ast,t)\,dv_\ast\right)
	\intertext{whence, recalling~\eqref{eq:w-_w+} and~\eqref{eq:w.f},}
	&=\frac{1}{2N}\left(\ave{p+q_\ast}\sum_{w^+=0}^{N}\sum_{w^-=0}^{N}\int_\cO w^+\varphi(v)g_N(v,w^-,w^+,t)\,dv\right. \\
	&\phantom{=} +\left.\ave{p_\ast+q}\sum_{w^+=0}^{N}\sum_{w^-=0}^{N}\int_\cO w^-\varphi(v)g_N(v,w^-,w^+,t)\,dv\right) \\
	&=\sum_{\tilde{w}^+\in\cW}\sum_{\tilde{w}^-\in\cW}\int_\cO\frac{\ave{p+q_\ast}\tilde{w}^++\ave{p_\ast+q}\tilde{w}^-}{2}\varphi(v)
		\tilde{g}(v,\tilde{w}^-,\tilde{w}^+,t)\,dv\,\Delta{\tilde{w}}^-\,\Delta{\tilde{w}}^+ \\
	&\xrightarrow{N\to\infty}\int_0^1\int_0^1\int_\cO\frac{\ave{p+q_\ast}w^++\ave{p_\ast+q}w^-}{2}\varphi(v)g(v,w^-,w^+,t)\,dv\,dw^-\,dw^+.
\end{align*}

Finally, we have obtained that~\eqref{eq:gtilde.lin_int} yields, in the limit $N\to\infty$,
\begin{align}
	\begin{aligned}[b]
		\frac{d}{dt} &\int_0^1\int_0^1\int_\cO\varphi(v)g(v,w^-,w^+,t)\,dv\,dw^-\,dw^+ \\
		&= \int_0^1\int_0^1\int_\cO\frac{\ave{p+q_\ast-1}w^++\ave{p_\ast+q-1}w^-}{2}\varphi(v)g(v,w^-,w^+,t)\,dv\,dw^-\,dw^+
	\end{aligned}
	\label{eq:g.lin_int}
\end{align}
for all \textit{linear} observables $\varphi$. This allows us to recover, in particular, the time trend of the \textit{mean opinion}
$$ \ave{v}(t):=\int_\cO vF(v,t)\,dv=\int_0^1\int_0^1\int_\cO vg(v,w^-,w^+,t)\,dv\,dw^-\,dw^+ $$
by letting $\varphi(v)=v$:
\begin{equation}
	\frac{d}{dt}\ave{v}=\int_0^1\int_0^1\int_\cO\frac{\ave{p+q_\ast-1}w^++\ave{p_\ast+q-1}w^-}{2}vg(v,w^-,w^+,t)\,dv\,dw^-\,dw^+.
	\label{eq:ave.v}
\end{equation}

It can be checked that~\eqref{eq:g.lin_int},~\eqref{eq:ave.v} are the very same equations resulting from the Boltzmann-type equation~\eqref{eq:std_Boltz-AR} in which one uses the collision kernel~\eqref{eq:b} (with $\mu=1/m$) and confines $\varphi$ to linear observables.

Therefore, for linear interaction rules we conclude that:
\begin{enumerate*}[label=(\roman*)]
\item the collective dynamics resulting from a Boltzmann-type approximation of the graph by means of the degree distribution of the agents differ, in general, from those emerging from real graph-mediated interactions, because the latter require the detailed knowledge of the graph connections (cf. the term \Circled{II} in~\eqref{eq:gtilde.lin_int});
\item nevertheless, the expectation of the opinion on the whole graph is correctly reproduced also by a Boltzmann-type approximation, i.e. without the detailed knowledge of the graph connections.
\end{enumerate*}
Although weaker than that obtained in Section~\ref{sect:equiv_boltz}, this result still helps to corroborate the validity of the approaches which approximate the graph structure with the statistical distribution of the degree of the vertices, at least as far as the investigation of certain statistical properties of the system is concerned.

By similar arguments, it is not difficult to see that this equivalence remains valid also for the following generalisation of the interaction rules~\eqref{eq:lin_int}:
\begin{equation}
	v'=p(v)+q(v_\ast), \qquad v_\ast'=p_\ast(v_\ast)+q_\ast(v),
	\label{eq:sep_int}
\end{equation}
where $p,\,q,\,p_\ast,\,q_\ast$ are now either deterministic or random functions defined on $\cO$ and such that $v',\,v_\ast'\in\cO$ for all $v,\,v_\ast\in\cO$. Specifically, from~\eqref{eq:gtilde.lin_int} we deduce that for a linear $\varphi$ it results
$$ \resizebox{\hsize}{!}{$
	\Circled{\mathrm{II}}\xrightarrow{N\to\infty}
		\displaystyle{\int_0^1\int_0^1\int_\cO}
			\dfrac{w^+\ave{\varphi(p(v))+\varphi(q_\ast(v))}+w^-\ave{\varphi(p_\ast(v))+\varphi(q(v))}}{2}	g(v,w^-,w^+,t)\,dv\,dw^-\,dw^+,
	$} $$
whence, passing to the limit $N\to\infty$ in the whole equation,
\begin{align*}
	\frac{d}{dt} &\int_0^1\int_0^1\int_\cO\varphi(v)g(v,w^-,w^+,t)\,dv\,dw^-\,dw^+ \\
	&= \frac{1}{2}\int_0^1\int_0^1\int_\cO w^+\ave{\varphi(p(v))+\varphi(q_\ast(v))-\varphi(v)}g(v,w^-,w^+,t)\,dv\,dw^-\,dw^+ \\
	&\phantom{=} +\frac{1}{2}\int_0^1\int_0^1\int_\cO w^-\ave{\varphi(p_\ast(v))+\varphi(q(v))-\varphi(v)}g(v,w^-,w^+,t)\,dv\,dw^-\,dw^+,
\end{align*}
which holds for all \textit{linear} observables $\varphi$. The same is obtained from the Boltzmann-type equation~\eqref{eq:std_Boltz-AR} with collision kernel~\eqref{eq:b} and $\mu=1/m$. In particular, also in this case the evolution of the mean opinion can be recovered simply from the knowledge of the statistical distribution of the degrees of the vertices of the graph, the detailed structure of graph connections being unnecessary:
\begin{align*}
	\frac{d}{dt}\ave{v} &= \frac{1}{2}\int_0^1\int_0^1\int_\cO w^+\ave{p(v)+q_\ast(v)-v}g(v,w^-,w^+,t)\,dv\,dw^-\,dw^+ \\
	&\phantom{=} +\frac{1}{2}\int_0^1\int_0^1\int_\cO w^-\ave{p_\ast(v)+q(v)-v}g(v,w^-,w^+,t)\,dv\,dw^-\,dw^+.
\end{align*}

Summarising, we have proved the following:
\begin{theorem}\label{theorem:equivalence_separable}
Let the interaction rules be of the form~\eqref{eq:sep_int}. 
Then the evolution of \emph{linear} observables $\varphi$ provided by the graph-mediated kinetic equation~\eqref{eq:F} with whatever adjacency matrix $\bM$ is formally equivalent, in the limit $N\to\infty$, to that provided by the Boltzmann-type equation~\eqref{eq:std_Boltz-AR} with $\mu=1/m$ in~\eqref{eq:b}.
\end{theorem}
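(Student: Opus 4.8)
The plan is to synthesise the computation carried out just before the statement, organising it as a genuine $N\to\infty$ passage. First I would retrace, for the separable rules~\eqref{eq:sep_int}, the upscaling of Section~\ref{sect:stat_description}: starting from the graph-mediated weak equation~\eqref{eq:F}, insert the test function $\Phi(x,v)=\phi_i(x)\varphi(v)$, reindex the vertices by their incoming/outgoing degrees via~\eqref{eq:g}, and rescale through $\tilde{g}(v,\tilde{w}^-,\tilde{w}^+,t)=N^2g_N(v,N\tilde{w}^-,N\tilde{w}^+,t)$, so as to arrive at the self-consistent identity~\eqref{eq:gtilde.lin_int}, whose right-hand side splits into the four pieces \Circled{\mathrm{I}}--\Circled{\mathrm{IV}}. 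Three of them, \Circled{\mathrm{I}}, \Circled{\mathrm{III}}, \Circled{\mathrm{IV}}, are plain Riemann sums in the normalised degrees against the piecewise-constant density $\tilde{g}$ and converge formally, as $N\to\infty$, to the displayed continuous integrals with no reference to the adjacency matrix. The entire difficulty is concentrated in \Circled{\mathrm{II}}, which still carries the bilinear form $\bf^T(v,t)\,\bM\,\bf(v_\ast,t)$ and hence, a priori, the full microscopic coupling of the agents on $\cG$.

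The decisive observation is that \Circled{\mathrm{II}} collapses precisely when \emph{both} hypotheses of the theorem are in force. Since $\varphi$ is linear and the post-interaction opinions~\eqref{eq:sep_int} are sums of a function of $v$ and a function of $v_\ast$, one has $\varphi(v')+\varphi(v_\ast')=\bigl[\varphi(p(v))+\varphi(q_\ast(v))\bigr]+\bigl[\varphi(p_\ast(v_\ast))+\varphi(q(v_\ast))\bigr]$, so the $v$- and $v_\ast$-integrals factor. Because $\bM\ones=\bw^+$ and $\ones^T\bM=(\bw^-)^T$ by~\eqref{eq:w-_w+}, the factor $\bf^T\bM\bf$ is then contracted to $(\bw^+)^T$ times a $v$-integral plus a $(\bw^-)^T$ times a $v_\ast$-integral; invoking the identities~\eqref{eq:w.f} and the rescaling above, these become the Riemann sum
$$\sum_{\tilde{w}^+\in\cW}\sum_{\tilde{w}^-\in\cW}\int_\cO\frac{\tilde{w}^+\ave{\varphi(p(v))+\varphi(q_\ast(v))}+\tilde{w}^-\ave{\varphi(p_\ast(v))+\varphi(q(v))}}{2}\,\tilde{g}(v,\tilde{w}^-,\tilde{w}^+,t)\,dv\,\Delta{\tilde{w}}^-\,\Delta{\tilde{w}}^+,$$
whose $N\to\infty$ limit is again a degree-based integral free of $\bM$. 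Collecting the limits of \Circled{\mathrm{I}}--\Circled{\mathrm{IV}} then yields a closed equation for $g$ valid for every linear observable $\varphi$, and in particular, specialising $\varphi(v)=v$, the evolution of the mean opinion $\ave{v}(t)$.

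The last step is to verify that the \emph{same} equation is produced, independently of any graph construction, by the Boltzmann-type model~\eqref{eq:std_Boltz-AR} with the kernel~\eqref{eq:b}: plugging $b(\bv,\bv_\ast)=\mu w^+w_\ast^-$ and $\Phi(\bv)=\varphi(v)$ into~\eqref{eq:std_Boltz-AR}, using~\eqref{eq:sep_int} together with the linearity of $\varphi$ to factor the collision integral, and integrating out the spectator degree variables against the (time-invariant) marginals $h^-$, $h^+$, one obtains coefficients $\mu m^-w^+$ and $\mu m^+w^-$; Lemma~\ref{lemma:m-m+} gives $m^-=m^+=m$, and the choice $\mu=1/m$ normalises both to $w^+$, $w^-$, matching the limit equation of the previous step (the ``action-action'' variant runs identically, starting from~\eqref{eq:std_Boltz-AA}). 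I expect the true obstacle to be not any individual computation but isolating cleanly why \Circled{\mathrm{II}} is intractable without both assumptions: only their combination turns $\varphi(v')+\varphi(v_\ast')$ into a sum of a $v$-only and a $v_\ast$-only term, which is exactly what lets the bilinear form $\bf^T\bM\bf$ be reduced to the contractions $\bM\ones$ and $\ones^T\bM$, i.e. to the degree vectors $\bw^\pm$. For a nonlinear $\varphi$, or for a genuinely non-separable rule, the bilinear form retains an irreducible dependence on all of $\bM$ and no degree-distribution description survives; this is the sharp boundary of the result and should be flagged as such.
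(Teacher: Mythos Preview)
Your proposal is correct and follows essentially the same route as the paper: the argument in Section~\ref{sect:general_interactions} leading up to the theorem is precisely this computation, first carried out for the linear rules~\eqref{eq:lin_int} and then extended verbatim to the separable rules~\eqref{eq:sep_int}, with the same reduction of \Circled{\mathrm{II}} via $\bM\ones=\bw^+$, $\ones^T\bM=(\bw^-)^T$ and the same appeal to Lemma~\ref{lemma:m-m+} on the Boltzmann side. Your explicit identification of why both hypotheses are needed to split $\varphi(v')+\varphi(v_\ast')$ into a $v$-only plus a $v_\ast$-only term is a useful clarifying remark that the paper leaves implicit.
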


\section{On the closure of~\texorpdfstring{\eqref{eq:gtilde.lin_int}}{} in the limit~\texorpdfstring{$\boldsymbol{N\to\infty}$}{}}
\label{sect:approximations_of_M}
As already stated, for non-polarised memory interactions it is hard to identify, in general, the limit of the term \Circled{II} in~\eqref{eq:gtilde.lin_int}. The reason is that if the post-interaction opinions $v'$, $v_\ast'$ are not polarised, i.e. they depend on both pre-interaction opinions $v$, $v_\ast$, then the lumped information brought by the distribution of the incoming and outgoing degrees of the vertices, cf.~\eqref{eq:w-_w+}, may be insufficient to compute \Circled{II}. However, one may expect that~\eqref{eq:gtilde.lin_int} turns into a closed equation for every observable $\varphi$ at least for particular classes of adjacency matrices $\bM$.

\subsection{Complete graphs}
Consider the adjacency matrix
\begin{equation}
	\bM=\ones\ones^T=
	\begin{pmatrix}
		1 & 1 & \cdots & 1 \\
		1 & 1 & \cdots & 1 \\
		\vdots & \vdots & \ddots & \vdots \\
		1 & 1 & \cdots & 1
	\end{pmatrix},
	\label{eq:M.all-to-all}
\end{equation}
which describes a complete graph, i.e. one in which every pair of vertices is connected by a unique edge (self-loops are included) so that agents experience all-to-all interactions. 
In this case, since $w^-=w^+=N$ for every vertex, ~\eqref{eq:g} takes the form
$$ g_N(v,w^-,w^+,t)=F(v,t)\delta_{w^-,N}\delta_{w^+,N}, $$
where $\delta_{\cdot,\cdot}$ denotes the Kronecker delta. Consequently, $\tilde{g}(v,\tilde{w}^-,\tilde{w}^+,t)=N^2F(v,t)\delta_{\tilde{w}^-,1}\delta_{\tilde{w}^+,1}$ and
$$ \Circled{\mathrm{I}}=\Circled{\mathrm{III}}=\Circled{\mathrm{IV}}=\int_\cO\varphi(v)F(v,t)\,dv. $$
On the other hand,
\begin{align*}
	\Circled{\mathrm{II}} &= \frac{1}{2}\int_\cO\int_\cO\ave{\varphi(v')+\varphi(v_\ast')}\frac{1}{N}\left(\ones^T\bf(v,t)\right)^T\frac{1}{N}\left(\ones^T\bf(v_\ast,t)\right)\,dv\,dv_\ast \\
	&= \frac{1}{2}\int_\cO\int_\cO\ave{\varphi(v')+\varphi(v_\ast')}F(v,t)F(v_\ast,t)\,dv\,dv_\ast,
\end{align*}
so that finally~\eqref{eq:gtilde.lin_int} reduces to
$$ \frac{d}{dt}\int_\cO\varphi(v)F(v,t)\,dv=\frac{1}{2}\int_\cO\int_\cO\ave{\varphi(v')+\varphi(v_\ast')-\varphi(v)-\varphi(v_\ast)}F(v,t)F(v_\ast,t)\,dv\,dv_\ast $$
for every $N\in\mathbb{N}$ (in particular, not only in the limit $N\to\infty$), \textit{every} observable $\varphi$ and \textit{every} definition of the interaction rules yielding the post-interaction opinions $v'$, $v_\ast'$. This is the weak form of a classical Boltzmann-type equation with a single microscopic state $v$, which can be equivalently obtained from~\eqref{eq:std_Boltz-AR} with $\cV=\cO$, $\bv=v$ and $b(\bv,\bv_\ast)\equiv 1$. In particular, a constant unitary collision kernel is the counterpart in~\eqref{eq:std_Boltz-AR} of all-to-all interactions expressed by the adjacency matrix~\eqref{eq:M.all-to-all}.

\subsection{Rank-one approximation of~\texorpdfstring{$\bM$}{}}
Let now
\begin{equation}
	\bM\approx\frac{\bw^+(\bw^-)^T}{M_N},
	\label{eq:M.rank-one}
\end{equation}
where $M_N:=\sum_{i\in\cI}\indeg{i}=\sum_{i\in\cI}\outdeg{i}$, which provides a natural rank-one approximation of any adjacency matrix with given incoming and outgoing vertex degrees, cf.~\eqref{eq:w-_w+}. In this case,
\begin{align*}
	\Circled{\mathrm{II}} &= \frac{1}{2M_NN^2}\int_\cO\int_\cO\ave{\varphi(v')+\varphi(v_\ast')}\left((\bw^+)^T\bf(v,t)\right)^T(\bw^-)^T\bf(v_\ast,t)\,dv\,dv_\ast \\
	&= 
		\begin{multlined}[t]
			\frac{1}{2M_N}\sum_{w_\ast^+=0}^{N}\sum_{w_\ast^-=0}^{N}\sum_{w^+=0}^{N}\sum_{w^-=0}^{N}\int_\cO\int_\cO w^+w_\ast^-\ave{\varphi(v')+\varphi(v_\ast')}g_N(v,w^-,w^+,t) \\
			\times g_N(v_\ast,w_\ast^-,w_\ast^+,t)\,dv\,dv_\ast
		\end{multlined}
		\\
	&=
		\begin{multlined}[t]
			\frac{N^2}{2M_N}\sum_{\tilde{w}_\ast^+\in\cW}\sum_{\tilde{w}_\ast^-\in\cW}\sum_{\tilde{w}^+\in\cW}\sum_{\tilde{w}^-\in\cW}\int_\cO\int_\cO\tilde{w}^+\tilde{w}_\ast^-\ave{\varphi(v')+\varphi(v_\ast')}
				\tilde{g}(v,\tilde{w}^-,\tilde{w}^+,t) \\
			\times\tilde{g}(v_\ast,\tilde{w}_\ast^-,\tilde{w}_\ast^+,t)\,dv\,dv_\ast\,\Delta{\tilde{w}^-}\,\Delta{\tilde{w}^+}\,\Delta{\tilde{w}_\ast^-}\,\Delta{\tilde{w}_\ast^+}.
		\end{multlined}
\end{align*}
By inspecting the proof of Lemma~\ref{lemma:m-m+} we notice that
$$ \tilde{m}_N^\pm:=\frac{M_N}{N^2}=\sum_{\tilde{w}^+\in\cW}\sum_{\tilde{w}^-\in\cW}\int_\cO\tilde{w}^\pm\tilde{g}(v,\tilde{w}^-,\tilde{w}^+,t)\,dv\,\Delta{\tilde{w}^-}\,\Delta{\tilde{w}^+} $$
is the mean normalised incoming/outgoing degree of the vertices of the graph, which converges to
$$ m=m^\pm=\int_0^1\int_0^1\int_\cO w^\pm g(v,w^-,w^+,t)\,dv\,dw^-\,dw^+ $$
as $N\to\infty$. Hence formally
\begin{multline*}
	\Circled{\mathrm{II}}\xrightarrow{N\to\infty}\frac{1}{2m}\int_0^1\int_0^1\int_0^1\int_0^1\int_\cO\int_\cO w^+w_\ast^-\ave{\varphi(v')+\varphi(v_\ast')}g(v,w^-,w^+,t) \\
	\times g(v_\ast,w_\ast^-,w_\ast^+)\,dv\,dv_\ast\,dw^-\,dw^+\,dw_\ast^-\,dw_\ast^+,
\end{multline*}
so that, passing to the limit $N\to\infty$ in~\eqref{eq:gtilde.lin_int} with $\bM$ approximated by~\eqref{eq:M.rank-one}, we obtain
\begin{align}
	\begin{aligned}[b]
		\frac{d}{dt} &\int_0^1\int_0^1\int_\cO\varphi(v)g(v,w^-,w^+,t)\,dv\,dw^-\,dw^+ \\
		&=
			\begin{multlined}[t]
				\frac{1}{2}\int_0^1\int_0^1\int_0^1\int_0^1\int_\cO\int_\cO\frac{w^+w_\ast^-}{m}\ave{\varphi(v')+\varphi(v_\ast')-\varphi(v)-\varphi(v_\ast)}g(v,w^-,w^+,t) \\
				\times g(v_\ast,w_\ast^-,w_\ast^+,t)\,dv\,dv_\ast\,dw^-\,dw^+\,dw_\ast^-\,dw_\ast^+.
			\end{multlined}
	\end{aligned}
\end{align}
This corresponds to the general form of the Boltzmann-type equation~\eqref{eq:std_Boltz-AR} with $\cV=\cO\times [0,\,1]\times [0,\,1]$, $\bv=(v,w^-,w^+)$ and the collision kernel $b$ given by~\eqref{eq:b} with $\mu=1/m$, for \textit{every} observable $\varphi$ and \textit{every} interaction rule providing $v'$, $v_\ast'$ as post-interaction opinions.

Therefore, we have proved:
\begin{theorem}
For any type of interaction rule, the graph-mediated kinetic equation~\eqref{eq:F} with the adjacency matrix $\bM$ approximated by the rank-one matrix~\eqref{eq:M.rank-one} is formally equivalent, in the limit $N\to\infty$, to the Boltzmann-type equation~\eqref{eq:std_Boltz-AR} with $\mu=1/m$ in~\eqref{eq:b}.
\end{theorem}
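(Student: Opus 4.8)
The plan is to start from the exact finite-$N$ decomposition~\eqref{eq:gtilde.lin_int}, which was derived without using any property of the interaction functions $\Psi$, $\Psi_\ast$ and is therefore available for an arbitrary interaction rule. The three terms $\Circled{\mathrm{I}}$, $\Circled{\mathrm{III}}$, $\Circled{\mathrm{IV}}$ depend on the vector-valued density $\bf$ only through the degree-resolved density $\tilde g$, so their formal limits as $N\to\infty$ are exactly the ones already recorded in Section~\ref{sect:general_interactions}, independently of the choice of $v'$, $v_\ast'$. The whole matter thus reduces to identifying the limit of $\Circled{\mathrm{II}}$, the only term still carrying the full adjacency matrix $\bM$.

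The decisive step is to insert the rank-one surrogate~\eqref{eq:M.rank-one} into $\Circled{\mathrm{II}}$. Since $\bw^+(\bw^-)^T$ has rank one, the bilinear form decouples,
\[ \bf^T(v,t)\,\bw^+(\bw^-)^T\,\bf(v_\ast,t)=\bigl((\bw^+)^T\bf(v,t)\bigr)\bigl((\bw^-)^T\bf(v_\ast,t)\bigr), \]
so that the coupling between the pre-interaction opinions $v$ and $v_\ast$ factorises through the two degree-weighted marginals. Using the identities~\eqref{eq:w.f} to rewrite $(\bw^+)^T\bf$ and $(\bw^-)^T\bf$ in terms of $g_N$, then rescaling the degrees and passing to $\tilde g$, one expresses $\Circled{\mathrm{II}}$ as a fourfold sum over $\cW$ (with uniform step $\Delta\tilde{w}^\pm=1/N$) against the product $\tilde g\otimes\tilde g$, carrying the weight $\tilde{w}^+\tilde{w}_\ast^-$ and an overall prefactor $N^2/M_N$. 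As observed in the proof of Lemma~\ref{lemma:m-m+}, $M_N/N^2$ is precisely the mean normalised degree $\tilde{m}_N^\pm$, which converges to $m$; letting $N\to\infty$ then turns the sums into integrals over $[0,1]^4$ and produces the gain contribution $\frac{1}{2m}\int w^+w_\ast^-\ave{\varphi(v')+\varphi(v_\ast')}\,g\,g$.

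Assembling the four limits, it remains to recognise the loss part: integrating out the spectator variables in $\frac{1}{2m}\int w^+w_\ast^-\varphi(v)\,g\,g$ reduces it to $\frac{1}{2}\int w^+\varphi(v)\,g$, since $\int_0^1 w_\ast^- h^-(w_\ast^-)\,dw_\ast^-=m^-=m$, and symmetrically the $\varphi(v_\ast)$ piece (using $\int_0^1 w^+h^+(w^+)\,dw^+=m^+=m$) reproduces the limit of $\Circled{\mathrm{IV}}$. Hence the limiting equation is exactly~\eqref{eq:std_Boltz-AR} with $\cV=\cO\times[0,1]\times[0,1]$, $\bv=(v,w^-,w^+)$ and collision kernel $b(\bv,\bv_\ast)=w^+w_\ast^-/m$, that is~\eqref{eq:b} with $\mu=1/m$. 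Since no hypothesis on $\Psi$, $\Psi_\ast$ was used anywhere, this identification holds for every interaction rule, which is the claim.

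The main --- and essentially the only --- obstacle I anticipate is the \emph{formal} character of the passage to the limit: the sum-to-integral steps for $\Circled{\mathrm{I}}$--$\Circled{\mathrm{IV}}$ and the replacement $M_N/N^2\to m$ are not justified rigorously, exactly as in Sections~\ref{sect:stat_description}--\ref{sect:equiv_boltz}. A fully rigorous statement would require quantitative control on the empirical degree distribution of the graph sequence, compactness and regularity of the family $\{\tilde g\}_N$, and uniform integrability to push the nonlinear product $\tilde g\otimes\tilde g$ to the limit; this is beyond the present scope and is only indicated here.
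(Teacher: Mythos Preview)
Your proposal is correct and follows essentially the same route as the paper: substitute the rank-one approximation~\eqref{eq:M.rank-one} into~$\Circled{\mathrm{II}}$, factorise the bilinear form via $\bf^T\bw^+(\bw^-)^T\bf=((\bw^+)^T\bf)((\bw^-)^T\bf)$, rewrite through~\eqref{eq:w.f} in terms of $g_N$ and then $\tilde g$, identify $N^2/M_N\to 1/m$ via Lemma~\ref{lemma:m-m+}, and combine with the already-known limits of $\Circled{\mathrm{I}}$, $\Circled{\mathrm{III}}$, $\Circled{\mathrm{IV}}$. Your additional remark that the limits of $\Circled{\mathrm{III}}$ and $\Circled{\mathrm{IV}}$ are exactly the loss terms of~\eqref{eq:std_Boltz-AR} after integrating out the spectator variables (using $m^-=m^+=m$) makes explicit a step the paper leaves implicit, and your caveat on the formal nature of the limit is appropriate and in keeping with the paper's own level of rigour.
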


\begin{remark}
We recall that, in the graph-mediated kinetic equations~\eqref{eq:AR-f} and then~\eqref{eq:F}, the entries of the adjacency matrix $\bM$ provide the values of the interaction kernel $B$, cf. Section~\ref{sect:derivation_kin_eq}. Thus, they represent the interaction rates of pairs of agents/vertices of the graph. The rank-one approximation~\eqref{eq:M.rank-one} of $\bM$ corresponds to assuming that such rates are simply proportional to the gross incoming and outgoing degrees of the agents/vertices regardless of the detailed graph topology, as if agents were substantially ``independent'' from the point of view of the graph connections. Remarkably, this independence is the key to the closure of~\eqref{eq:F} in the sense of a classical Boltzmann-type description.
\end{remark}

\section{Numerical experiments}
\label{sect:numerics}
In order to validate the results obtained in the previous sections, we perform a series of numerical experiments choosing as the underlying graph a real social network built from the ``Social circles: Twitter'' dataset~\cite{snapnets,Twitter_Dataset}. This dataset contains 81\,306 vertices, viz. users, and 1\,768\,149 edges, viz. their social connections.

\begin{algorithm}[!t]
\caption{Monte Carlo algorithm for ``action-reaction'' Boltzmann-type equations on a graph }\label{alg:monte_carlo}
\begin{algorithmic}
\Require adjacency matrix $\bM$; initial state $V_0\in\cO^N$; time step $\Delta{t}>0$; final time $T>0$
\State $\tilde{V}\gets V_0$
\State $t\gets 1$
\For{$t<T$}
\State $\ave{\varphi}(t)\gets\frac{1}{N}\sum_{i=1}^N\varphi(\tilde{V}(i))$ 
\State $V\gets\tilde{V}$
\State $P\gets \text{random permutation of } \left\{1,\,\dots,\,N\right\}$
\State $p_1\gets (P(1),\,\dots,\,P(N/2))$
\State $p_2\gets (P(N/2 + 1),\,\dots,\,P(N))$
\State $i\gets 1$
\For{$i<N/2$}
\State $\Theta\sim\operatorname{Bernoulli}\!\left(B(p_1(i),p_2(i))\Delta{t}\right)$
\State $\tilde{V}(p_1(i))\gets V(p_1(i))(1-\Theta)+\Psi(V(p_1(i)),V(p_2(i)))\Theta$
\State $\tilde{V}(p_2(i))\gets V(p_2(i))(1-\Theta)+\Psi_\ast(V(p_2(i)),V(p_1(i)))\Theta$
\State $i\gets i+1$
\EndFor
\State $t\gets t+\Delta t$
\EndFor
\end{algorithmic}
\end{algorithm}

For the sake of completeness, we quickly recall how to solve approximately Boltzmann-type equations by a Monte Carlo numerical approach. Algorithm~\ref{alg:monte_carlo} consists in simulating literally the binary interaction dynamics described by \eqref{eq:interaction_algorithm}. At each time step, agents are randomly paired and, for each pair, a Bernoulli random variable depending on the interaction kernel $B$ is sampled to determine whether an interaction occurs. If it does then the states are updated according to the corresponding interaction functions. In the case of the graph-mediated kinetic equation \eqref{eq:F} the interaction kernel $B$ depends on the adjacency matrix $\bM$. Conversely, in the case of the Boltzmann-type equation~\eqref{eq:std_Boltz-AR} it depends only on the incoming and outgoing degrees as specified in~\eqref{eq:b}.

\begin{figure}[!t]
\centering
\includegraphics[width=\linewidth]{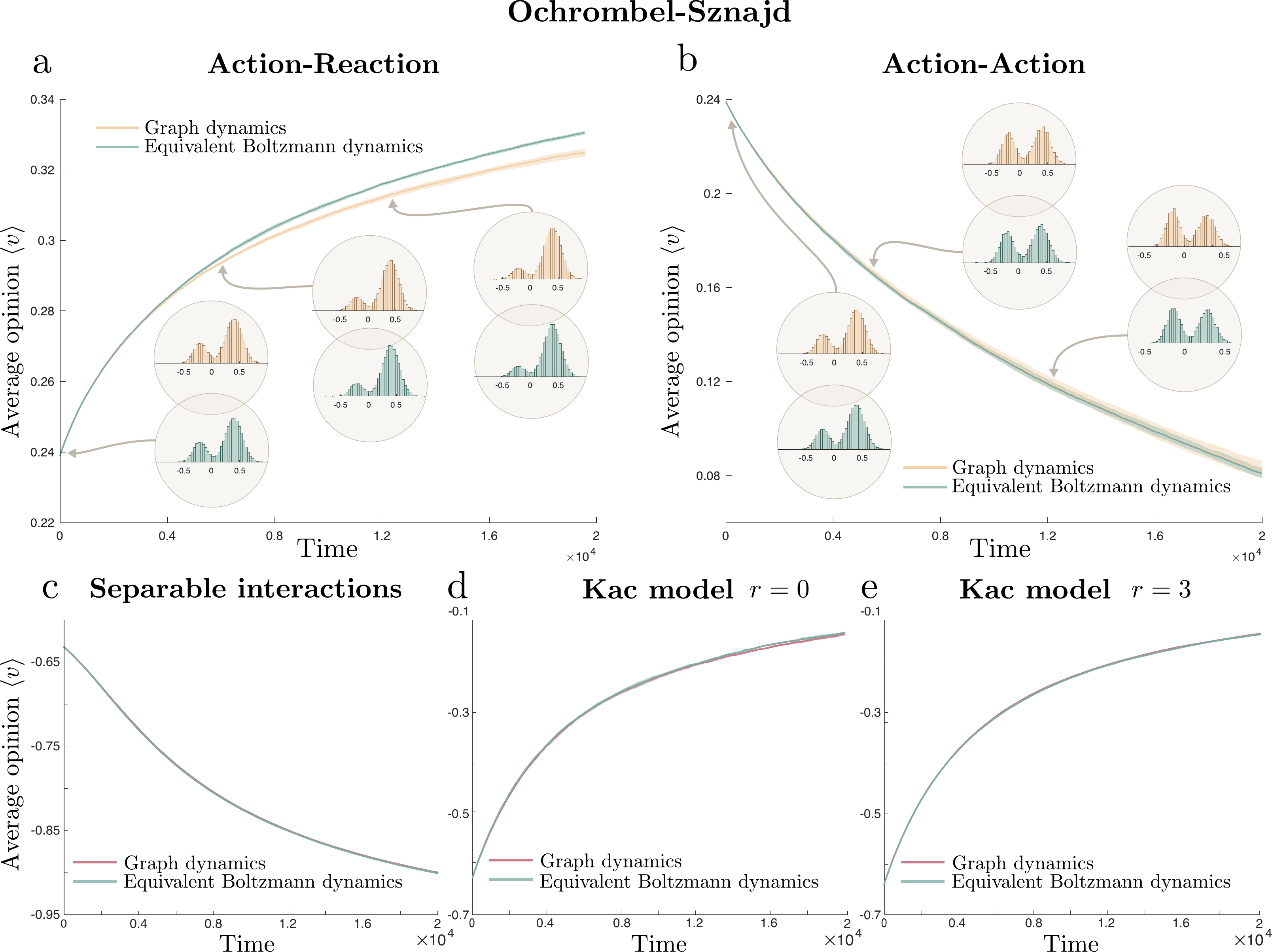}
\caption{Numerical validation of the equivalence between the graph-mediated kinetic equations~\eqref{eq:AR-f},~\eqref{eq:AA-f} and the equivalent Boltzmann-type equations~\eqref{eq:std_Boltz-AR},~\eqref{eq:std_Boltz-AA} using the ``Social circles: Twitter'' network dataset~\cite{snapnets,Twitter_Dataset} for different interaction rules. Panels \textbf{a}, \textbf{b}: ``action-reaction'' and ``action-action'' Ochrombel opinion formation model, cf. Example~\ref{ex:Ochrombel}. The solid lines and filled regions represent respectively mean values and 95\% confidence intervals computed over 10 repetitions. Both dynamics share the same initial condition. Panel \textbf{c}: separable interaction rule~\eqref{eq:separable}. 
Panels \textbf{d}, \textbf{e}: inelastic Kac-inspired separable interaction rule~\eqref{eq:Kac}.}
\label{fig:Twitter}
\end{figure}

As a first experiment, we check the equivalence between graph-mediated and Boltzmann-type kinetic descriptions discussed in Section~\ref{sect:equiv_boltz}, i.e. in the case of polarised memory interactions. 
For simplicity, we choose as interaction rule the Ochrombel simplification of the Sznajd opinion formation model, cf. Example~\ref{ex:sznajd}. We simulate both the ``action-reaction'' (Figure~\ref{fig:Twitter}a) and the ``action-action'' (Figure~\ref{fig:Twitter}b) versions of the dynamics, first on the true network with the graph-mediated kinetic equations~\eqref{eq:AR-f} and~\eqref{eq:AA-f}, respectively, then with their equivalent Boltzmann-type equations~\eqref{eq:std_Boltz-AR},~\eqref{eq:std_Boltz-AA}. We run the simulations for $20\,000$ time steps, averaging over $10$ repetitions. In Figure~\ref{fig:Twitter}a and Figure~\ref{fig:Twitter}b we plot the time evolution of the mean opinion resulting from the graph-mediated kinetic equation (orange line) and from the equivalent Boltzmann-type equation (blue line). We stress that the latter is not aware of the graph structure but only of the distributions of the incoming and outgoing degrees. 
In the same panels we also plot the statistical distribution of the opinion at three different time instants. As predicted by our theory, the Boltzmann-type equation reproduces faithfully the trends of the graph-mediated kinetic equation. In particular, we notice that the trends of the mean opinion are opposite in the ``action-reaction'' and ``action-action'' dynamics. This is in agreement with the respective equivalent Boltzmann-type equations~\eqref{eq:std_Boltz-AR}-\eqref{eq:b} and~\eqref{eq:b}-\eqref{eq:std_Boltz-AA}, whose right-hand sides turn out to differ only in the signs, which are indeed opposite, when they are evaluated with the Ochrombel interaction rules of Example~\ref{ex:sznajd}. Empirically, this can be explained by observing that in the ``action-reaction'' dynamics agents with a higher \textit{incoming} degree are more likely to change opinion. Conversely, in the ``action-action'' dynamics agents with a higher \textit{outgoing} degree are more likely to change opinion. Therefore, given the same initial joint opinion-degrees distribution, the trends of the mean opinion are expected to be opposite.

When the interactions have the more general separable form~\eqref{eq:sep_int}, Theorem \ref{theorem:equivalence_separable} ensures that the equivalence still holds for the mean opinion. We validate this result for the following separable interaction rule (Figure~\ref{fig:Twitter}c)
\begin{equation}
    p(v)=p_\ast(v)=\operatorname{sign}(v)\min\left\{\frac{1}{2},\,v^2\right\}, \qquad
    q(v)=q_\ast(v)=\operatorname{sign}(v)\min\left\{\frac{1}{2},\,\sqrt{\abs{v}}\right\}
    \label{eq:separable}
\end{equation}
and for the inelastic Kac-inspired interaction rule
\begin{equation}
    p(v)=p_\ast(v)=\cos(\theta)\abs{\cos(\theta)}^rv, \qquad
    q(v)=-q_\ast(v)=-\sin(\theta)\abs{\sin(\theta)}^rv,
    \label{eq:Kac}
\end{equation}
cf.~\cite{pulvirenti2004asymptotic}, where $\theta\sim\mathcal{U}([0,\,2\pi])$ is a uniformly distributed random parameter and the exponent $r$ is chosen to be either $r=0$ (Figure~\ref{fig:Twitter}d) or $r=3$ (Figure~\ref{fig:Twitter}e). In all cases, the trends of the mean opinion yielded by the graph-mediated kinetic equation and by the Boltzmann-type equation can be seen to coincide as predicted by the theory.

\begin{remark}
The choice~\eqref{eq:Kac} corresponds precisely to the inelastic Kac model described in~\cite{pulvirenti2004asymptotic}, where however the variable $v$ does not stand for the opinion of social network users but for the speed of gas molecules. Indeed, the Kac model is supposed to represent a caricature of a one-dimensional gas, whose molecules undergo a mixing of their speeds rather than proper physical collisions so as to give rise to non-trivial one-dimensional dynamics. In our context, we may interpret a graph-based Kac model as a one-dimensional caricature of a gas in which molecule ``collisions'' are \textit{heterogeneously} distributed. In particular, molecules with a low (resp. high) outgoing degree ``hit'' few (resp. many) other molecules while molecules with a low (resp. high) incoming degree are ``hit'' by few (resp. many) other molecules.
\end{remark}

\section{Conclusions}
\label{sect:conclusions}
In this paper, we have considered the problem of modelling networked interacting multi-agent systems by means of kinetic equations incorporating a statistical description of the graph of connections among the agents. The main goal was to obtain evolution equations which, in the spirit of the kinetic theory, did not require a detailed knowledge of the graph topology while still retaining fundamental features of the possibly inhomogeneous distribution of the connections.

Starting from networked particle interaction models, first we have derived graph-mediated kinetic equations. By this we mean a system of Boltzmann-type equations which describe the evolution of the state of each vertex, viz. agent, of the graph and are coupled according to the precise structure of the graph connections encoded in the adjacency matrix of the graph. Next, we have shown that, formally, in the limit of an infinite number of vertices of the graph such a system can be reduced to a single Boltzmann-type equation defined on an augmented state space, namely one which includes also the (normalised) incoming and outgoing degrees of the agents regarded as continuous variables in the interval $[0,\,1]$. In particular, the limit procedure has allowed us to identify a precise expression of the interaction kernel of such a Boltzmann-type equation, which carries all the information about the aforesaid degrees thereby constituting a statistical approximation of the adjacency matrix. Remarkably, such an expression, which is proportional to the product of the (normalised) incoming and outgoing degrees of the interacting agents, matches consistently the one postulated heuristically in the literature via the concept of connectivity of the agents. In conclusion, we have proved that a networked interacting particle system can be described statistically by a Boltzmann-type equation, whose interaction kernel replaces the adjacency matrix of the graph and where the kinetic distribution function does not only account for the distribution of the particle state involved in the binary interactions but also for the statistical distribution of the degrees of the graph.

As a matter of fact, this result is exact only for a very special class of binary interactions, that we have called polarised interactions. These are interactions in which the post-interaction states depend only on one of the two pre-interaction states. For more general interactions, such as linear or separable interactions, the limiting Boltzmann-type equation is not equivalent to the original system of graph-mediated equations, meaning that the sole degree distribution is not sufficient to reproduce faithfully the collective dynamics on the graph. Nevertheless, in this case we have proved that the limit equation yields the correct time evolution of the mean state of the agents. In more generality, we have also proved that the limit equation describes correctly the collective evolution of any networked interacting particle system whose adjacency matrix is approximated \textit{a priori} by a rank-one matrix obtained from the product of the incoming and outgoing degree vectors of the graph. This provides a powerful strategy upon which to rely in practical cases in which one is interested in the collective trend of processes taking place on large, complex networks, for which the knowledge of the detailed structure may be challenging or even impossible. Indeed, the degree distributions of a graph are often easier to obtain than the full adjacency matrix.

Further research directions may consist in detailed analytical investigations of the limit Boltz\-mann-type equations to ascertain the impact of different degree distributions on the marginal distribution of the physical state variable of the particles as well as in the extension of the present study to the case of ensembles of graphs.

\section*{Acknowledgments}
A.T. is member of GNFM (Gruppo Nazionale per la Fisica Matematica) of INdAM (Istituto Nazionale di Alta Matematica ``F. Severi''), Italy.
\paragraph{Competing interests:} The authors declare none.

\bibliographystyle{plain}
\bibliography{NmRmTa-Boltzmann_graphs}

\begin{thebibliography}{10}

\bibitem{burger2021network}
M.~Burger.
\newblock Network structured kinetic models of social interactions.
\newblock {\em Vietnam J. Math.}, 49(3):937--956, 2021.

\bibitem{burger2021kinetic}
M.~Burger.
\newblock Kinetic equations for processes on co-evolving networks.
\newblock {\em Kinet. Relat. Models}, 15(2):187--212, 2022.

\bibitem{coppini2020law}
F.~Coppini, H.~Dietert, and G.~Giacomin.
\newblock A law of large numbers and large deviations for interacting
  diffusions on {E}rd{\H{o}}s--{R}{\'e}nyi graphs.
\newblock {\em Stoch. Dyn.}, 20(2):2050010, 2020.

\bibitem{delattre2016note}
S.~Delattre, G.~Giacomin, and E.~Lu{\c{c}}on.
\newblock A note on dynamical models on random graphs and {F}okker--{P}lanck
  equations.
\newblock {\em J. Stat. Phys.}, 165(4):785--798, 2016.

\bibitem{fraia2020RUMI}
M.~Fraia and A.~Tosin.
\newblock The {B}oltzmann legacy revisited: kinetic models of social
  interactions.
\newblock {\em Mat. Cult. Soc. Riv. Unione Mat. Ital. (I)}, 5(2):93--109, 2020.

\bibitem{he2023JAMP}
H.~He.
\newblock Kinetic modeling of an opinion model on social networks.
\newblock {\em J. Appl. Math. Phys.}, 11:1487--1497, 2023.

\bibitem{snapnets}
J.~Leskovec and A.~Krevl.
\newblock {SNAP} {D}atasets: {S}tanford {L}arge {N}etwork {D}ataset
  {C}ollection.
\newblock \url{http://snap.stanford.edu/data}, June 2014.

\bibitem{Twitter_Dataset}
J.~Leskovec and J.~Mcauley.
\newblock {L}earning to {D}iscover {S}ocial {C}ircles in {E}go {N}etworks.
\newblock In F.~Pereira, C.~J. Burges, L.~Bottou, and K.~Q. Weinberger,
  editors, {\em Advances in Neural Information Processing Systems}, volume~25.
  Curran Associates, Inc., 2012.

\bibitem{lovasz2012BOOK}
L.~Lov\'{a}sz.
\newblock {\em {L}arge {N}etworks and {G}raph {L}imits}, volume~60 of {\em
  Colloquium Publications}.
\newblock American Mathematical Society, 2012.

\bibitem{loy2022PTRSA}
N.~Loy, M.~Raviola, and A.~Tosin.
\newblock Opinion polarization in social networks.
\newblock {\em Philos. Trans. Roy. Soc. A}, 380(2224):20210158/1--15, 2022.

\bibitem{loy2021KRM}
N.~Loy and A.~Tosin.
\newblock {B}oltzmann-type equations for multi-agent systems with label
  switching.
\newblock {\em Kinet. Relat. Models}, 14(5):867--894, 2021.

\bibitem{loy2021MBE}
N.~Loy and A.~Tosin.
\newblock A viral load-based model for epidemic spread on spatial networks.
\newblock {\em Math. Biosci. Eng.}, 18(5):5635--5663, 2021.

\bibitem{ochrombel2001IJMP}
R.~Ochrombel.
\newblock Simulation of {S}znajd sociophysics model with convincing single
  opinions.
\newblock {\em Internat. J. Modern Phys. C}, 12(7):1091, 2001.

\bibitem{pareschi2013BOOK}
L.~Pareschi and G.~Toscani.
\newblock {\em Interacting {M}ultiagent {S}ystems: {K}inetic equations and
  {M}onte {C}arlo methods}.
\newblock Oxford University Press, 2013.

\bibitem{pulvirenti2004asymptotic}
A.~Pulvirenti and G.~Toscani.
\newblock Asymptotic properties of the inelastic {K}ac model.
\newblock {\em J. Stat. Phys.}, 114(5-6):1453--1480, 2004.

\bibitem{sznajd-weron2000IJMP}
K.~Sznajd-Weron and J.~Sznajd.
\newblock Opinion evolution in closed community.
\newblock {\em Internat. J. Modern Phys. C}, 11(6):1157--1165, 2000.

\bibitem{toscani2018PRE}
G.~Toscani, A.~Tosin, and M.~Zanella.
\newblock Opinion modeling on social media and marketing aspects.
\newblock {\em Phys. Rev. E}, 98(2):022315/1--15, 2018.

\end{thebibliography}
\end{document}